\newtheorem{lemma}{Lemma}[section]
\newtheorem{theorem}{Theorem}[section]
\newtheorem{corollary}{Corollary}[section]
\newtheorem{thm}{Theorem}[section]
\newtheorem{proposition}{Proposition}[section]
\theoremstyle{definition}
\newtheorem{example}{Example}[section]
\newtheorem{remark}{Remark}[section]
\newtheorem{definition}{Definition}[section]
\newtheorem{notation}{Notation}[section]
\DeclareMathOperator{\AS}{AS}
\newcommand{\myitem}[1]{%
\item[#1]\protected@edef\@currentlabel{#1}%
}
\newcommand\footnoteref[1]{\protected@xdef\@thefnmark{\ref{#1}}\@footnotemark}
\title{The Dynamics of Canalizing Boolean Networks}
\author{Elijah Paul\footnote{{eligpaul@gmail.com}, California Institute of Technology, Pasadena, USA} \textsuperscript{, $\S$}, Gleb Pogudin\footnote{{pogudin.gleb@gmail.com}, Department of Computer Science, National Research University Higher School of Economics, Moscow, Russia and Courant Institute of Mathematical Sciences, New York University, New York,  USA} \textsuperscript{, $\S$}, William Qin\footnote{{wqin2008@gmail.com}, Millburn High School, Millburn, USA} \textsuperscript{,}\footnote{The first three authors are ordered alphabetically}, Reinhard Laubenbacher\footnote{{laubenbacher@uchc.edu}, Center for Quantitative Medicine, University of Connecticut Health Center and Jackson Laboratory for Genomic Medicine, Farmington, USA}}
\date{}
\begin{document}

\maketitle

\begin{abstract}
    Boolean networks are a popular modeling framework in computational biology to capture the dynamics of molecular networks, such as gene regulatory networks. 
    It has been observed that many published models of such networks are defined by regulatory rules driving the dynamics that have certain so-called canalizing properties. 
    In this paper, we investigate the dynamics of a random Boolean network with such properties using analytical methods and simulations.
    
    From our simulations, we observe that
    Boolean networks with higher canalizing depth have generally fewer attractors, the attractors are smaller, and the basins are larger, with implications for the stability and robustness of the models.
    These properties are relevant to many biological applications.
    Moreover, our results show that, from the standpoint of the attractor structure, high canalizing depth, compared to relatively small positive canalizing depth, has a very modest impact on dynamics.
    
    Motivated by these observations, we conduct mathematical study of the attractor structure of a random Boolean network of canalizing depth one (i.e., the smallest positive depth).
    For every positive integer $\ell$, we give an explicit formula for the limit of the expected number of attractors of length $\ell$ in an $n$-state random Boolean network as $n$ goes to infinity.
\end{abstract}

\section{Introduction}

Dynamic mathematical models are a key enabling technology in systems biology. Depending on the system to be 
modeled, the data and information available for their construction, and the questions to be answered, different 
modeling frameworks can be used. For kinetic models, systems of ordinary differential equations have a long
tradition. Generally, they will have the very special structure of polynomial equations representing Michaelis-Menten kinetics,
even in the case of systems, such as gene regulatory networks, that are not properly biochemical reaction
networks. It is this special structure that gives models desirable properties and aids in model analysis. 
Besides continuous models, a range of discrete models are finding increasingly frequent use, in particular 
Boolean network models of a broad variety of biological systems, from intracellular molecular networks to 
population-level compartmental models (see, e.g., \cite{DB08, KPST03,ML09,VS11,SR07}), 
going back to the work of S. Kauffman in the 1960s~\cite{K67, K69, K69b}. 
While Boolean network models, a collection of nodes, whose regulation by other nodes is described via a 
logical rule built from Boolean operators, are intuitive and mathematically simple to describe, their analysis is 
severely limited by the lack of mathematical tools. It generally consists of simulation results. 
Any set function on binary strings that takes on binary values
can be represented as a Boolean function, so that the class of general Boolean networks is identical to the 
class of set functions on binary strings of a given length, making any general analysis impossible. The search 
for special classes of Boolean functions that are broad enough to cover all or most rules that occur in biology,
but special enough to allow for mathematical approaches has a long history. 

It was again S. Kauffman who proposed a class of functions~\cite{K69} with properties inspired by the developmental biology
concept of \emph{canalization}, going back to C. Waddington in the 1940s~\cite{W42}. There is some evidence that 
canalizing Boolean functions do indeed appear disproportionately in published models, and that the dynamics
of Boolean network models consisting of canalizing functions has special properties, in particular
a "small" number of attractors. This is important since, in the case of intracellular molecular network models, 
attractors correspond to the different phenotypes a cell is capable of. Here, again, the majority of 
available results are obtained by simulating large numbers of such networks. 
The main question of this paper is:\emph{What do the dynamics of a random canalizing Boolean network look like?}
We approach this question using both computer simulations and analytical methods, with the main result of the 
paper being Theorem~\ref{thm:main}, which gives a \textbf{provable} formula for the number of expected attractors
of a general Boolean network with a particular canalization property. In addition to providing important information
about canalizing Boolean network models, this result can be viewed as a part of a growing body of mathematical
results characterizing this class of networks that promises to be as rich as that for chemical reaction network models
based on ordinary differential equations.

\section{Background}

The property of canalization for Boolean functions was introduced by S. Kauffman in~\cite{K69}, inspired by the concept of \textit{canalization} from developmental biology~\cite{W42}.
A Boolean function is canalizing if there is a variable and a value of the variable such that if the variable takes the value, then the value of the function does not depend on other variables.
It was shown that models defined by such functions often exhibit less chaotic and more stable behavior~\cite{KPST04, KH07}.
Nested canalizing functions, obtained by applying the concept of canalization recursively, were introduced in~\cite{KPST03}.
They form a special subset of canalizing functions and have stable dynamics~\cite{KPST04}.
We note that there are other important properties shared by Boolean networks arising in modeling (for example, sparsity~\cite{K69}). 
In this paper we focus only on canalization and its impact on the dynamics, and one of the natural future directions would be to consider several such properties simultaneously.

To cover more models arising in applications, the notion of nested canalizing function was relaxed by Layne, Dimitrova, and Macaulay~\cite{LDM12} by assigning to every Boolean function its \emph{canalizing depth}.
Non-canalizing functions have canalizing depth zero and nested canalizing functions have the maximal possible canalizing depth, equal to the number of variables.
Canalizing depth of a Boolean network is defined as the minimum of the canalizing depths of the functions defining the network.
In~\cite{LDM12}, activities and sensitivities of functions with different canalizing depths and stability and criticality of Boolean networks composed from such functions were investigated.
It has been observed that Boolean networks of higher canalizing depth tend to be more stable and less sensitive. 
However, increasing the canalizing depth to the maximum does not improve the stability significantly compared to moderate positive canalizing depth.  
These observations give a strong indication of the biological utility of canalizing function, even with small canalizing depth.

Attractors in Boolean network models can be interpreted as distinct cell types~\cite[p.~202]{K93} and their lengths can be viewed as the variety of different gene expression patterns corresponding to the cell type.
Thus, understanding the attractor structure of a random Boolean network defined by functions of a fixed canalizing depth is important for assessing biological relevance of such models. 
Analytic study of the attractor structure of nested canalizing Boolean networks has been done in~\cite{KPST04}.
For discussion about attractors of length one (i.e., steady state), we refer to~\cite{VAHL14}.

\section{Our results.}
The main question of this paper is:\emph{What do the dynamics of a random canalizing Boolean network look like?}
We approach this question using both computer simulations and analytical methods.

In our \textbf{computational experiments}, we generate approximately $30$ million random Boolean networks of all possible canalizing depths with the number of variables ranging from $4$ to $20$.
For each of these networks, we determine sizes of all the attractors and basins of attraction and analyze the obtained data.
We discover the following:
\begin{enumerate}[label=\textbf{(\arabic*)}]
    \item\label{obs:decrease} For a fixed number of variables, the sample mean of the number of attractors and average size of an attractor decrease when the canalizing depth increases. 
    \item The decrease of the average size of an attractor is much greater than the decrease of the number of attractors as the canalizing depth increases.
    \item\label{obs:the_same} Both decreases from~\ref{obs:decrease} are substantial when the canalizing depth changes from zero to small canalizing depths but a further increase of the canalizing depth does not lead to a significant decrease for either the sample means or for the empirical distributions.
    \item\label{obs:larger_n} The relative decrease of the sample mean of the number of attractors and the average attractor size when the canalizing depth changes from zero to one becomes sharper when the number of variables increases.
\end{enumerate}

The observations~\ref{obs:decrease} and~\ref{obs:the_same} are consistent with the results obtained in~\cite{LDM12} for sensitivity and stability.
This provides new evidence that Boolean networks of small positive canalizing depth are almost as well-suited for modeling as those with nested canalizing functions, from the point of view of stability.
Since there are many more canalizing functions of small positive canalizing depth than nested canalizing functions~\cite[Section~5]{HM16}, they provide a richer modeling toolbox.

Motivated by the observation~\ref{obs:the_same}, we conduct a \textbf{mathematical study} of the attractor structure of a random Boolean network of canalizing depth one (that is, the minimal positive depth).
Our main theoretical result, Theorem~\ref{thm:main}, gives, for every positive integer $\ell$, a formula for the limit of the expected number of attractors of length $\ell$ in a random Boolean network of depth one. 
The same formulas are valid for a random Boolean network defined by canalizing functions (see Remark~\ref{rem:at_least_one}).
In particular, our formulas show that a large random network of depth one, on average, has more attractors of small sizes that an average Boolean network (Remark~\ref{rem:compare_with_depth_zero}).

Formulas similar to the ones in our proofs (e.g., in Lemma~\ref{lem:represent_B}) have already appeared in the study of the average number of attractors of a given length in sparse Boolean networks, e.g.~\cite[Eq. (2)]{ST2003} and~\cite[Eq. (6)]{Drossel}.
The results of~\cite{ST2003} and~\cite{Drossel} are based on describing the asymptotic behavior of these formulas in terms of $N$, the number of nodes in the network, and the asymptotics are of the form $\mathrm{O}(N^\alpha)$.
In our case, the average number of attractors of a given length simply approaches a constant as $N \to \infty$ (that is, is $\mathrm{O}(1)$), but our methods allow us to find the exact value of this constant.

The source code we used for generating and analyzing data is available at~\url{https://github.com/MathTauAthogen/Canalizing-Depth-Dynamics}.
The raw data is available at~\url{https://github.com/MathTauAthogen/Canalizing-Depth-Dynamics/tree/master/data}.

\paragraph{Structure of the paper.} The rest of the paper is organized as follows. Section~\ref{sec:preliminaries} contains necessary definitions about canalizing functions and Boolean networks.
Outlines of the algorithms used in our computational experiments are in Section~\ref{sec:algorithms}.
The main observations are summarized in Section~\ref{subsec:results}.
Our main theoretical result about attractors in a random Boolean network of canalizing depth one (Theorem~\ref{thm:main}) is presented in Section~\ref{sec:th_result}.
Section~\ref{sec:conclusions} contains conclusions.
The proofs are located in the Appendix.

%%%%%%%%%%%%%%%%%%%%%%%%%%%%%%%%%

\section{Preliminaries}\label{sec:preliminaries}

\begin{definition}
A \emph{Boolean network} is a tuple $\mathbf{f} = (f_1, f_2, \ldots ,f_n)$ of Boolean functions in $n$ variables. 
For a state $\bm{a}_t = (a_{t,1}, a_{t,2}, \ldots,  a_{t,n}) \in \{0, 1\}^n$ at time $t$, 
we define the state  $\bm{a}_{t+1} :    = \mathbf{f}(\bm{a}_t) =  (a_{t+1,1},\ldots ,a_{t+1,n}) \in \{0, 1\}^n$ at time $t + 1$ by
\begin{align*}
&a_{t+1,1}=f_1(a_{t, 1},\ldots,a_{t, n}),\\
&\vdots\\
&a_{t+1,n}=f_n(a_{t, 1},\ldots, a_{t, n}).
\end{align*}
\end{definition}

\begin{definition}[Attractors and basins]
Let $\mathbf{f} = (f_1, \ldots, f_n)$ be a Boolean network.
\begin{itemize}
    \item A sequence $\bm{a}_1, \ldots, \bm{a}_\ell \in \{0, 1\}^n$ of distinct states is called an \emph{attractor} of $\mathbf{f}$ if $\mathbf{f}(\bm{a}_i) = \bm{a}_{i + 1}$ for every $1 \leqslant i < \ell$ and $\mathbf{f}(\bm{a}_\ell) = \bm{a}_1$.
    \item An attractor $\bm{a}_1, \ldots, \bm{a}_\ell \in \{0, 1\}^n$ is called \emph{a steady state} if $\ell = 1$.
    \item Let $A = (\bm{a}_1, \ldots, \bm{a}_\ell) \in \left(\{0, 1\}^{n}\right)^\ell$ be an attractor of $\mathbf{f}$.
    The \emph{basin} of $A$ is the set
    \[
    \{ \bm{b} \in \{0, 1\}^n \text{ }| \text{ }\exists N \; : \; \underbrace{\mathbf{f}(\mathbf{f}(\ldots(\mathbf{f}(}_{N \text{ times}}\bm{b})\ldots)) \in A \}.
    \]
\end{itemize}
\end{definition}

\begin{definition}
  A nonconstant function $f(x_1,\ldots x_n)$ is \emph{canalizing} with respect to a variable $x_i$ if there exists a \emph{canalizing value} $a \in \{0, 1\}$ such that 
  \[
    f(x_1, \ldots, x_{i - 1}, a, x_{i + 1}, \ldots, x_n) \equiv \text{const}.
  \] 
\end{definition}
  
  \begin{example}
  Consider $f(x_1,x_2)=x_1 \cdot x_2$ (the product is understood modulo 2, that is, logical AND).
  It is canalizing with respect to $x_1$ with canalizing value $0$, because $f(0, x_2) = 0$ regardless of the value of $x_2$. 
  Analogously, it is canalizing with respect to $x_2$ with canalizing value $0$.
  
  Consider $g(x_1,x_2)=x_1 + x_2$ (summation is understood modulo 2, that is, logical XOR).
  It is not canalizing with respect to $x_1$, because
  \[
    g(0, x_2) = x_2 \neq \operatorname{const} \quad\text{ and }\quad g(1, x_2) = \bar{x}_2 \neq \operatorname{const}.
  \]
  The same argument works for $x_2$ as well.
  \end{example}

\begin{definition}\label{def:can_depth}
$f(x_1,\ldots, x_n)$ has \emph{canalizing depth}~\cite[Definition~2.3]{HM16} $k$ if it can be expressed as 
\[
f=
\begin{cases}
b_1 & x_{i_1}= a_1\\
b_2 & x_{i_1} \not= a_1, x_{i_2}= a_2\\
\vdots\\
b_k & x_{i_1} \not=a_1, x_{i_2}\not= a_2 \ldots x_{i_{k-1}}\not=a_{k-1}, x_{i_k}=a_k\\
g\not\equiv b_k & x_{i_1}\neq a_1,\ldots,x_{i_k}\neq a_k,
\end{cases}
\]
where
\begin{itemize}[itemsep=0pt]
  \item $i_1, \ldots, i_k$ are distinct integers from $1$ to $n$;
  \item $a_1, \ldots, a_k, b_1, \ldots, b_k \in \{0, 1\}$;
  \item $g$ is a noncanalizing function in the variables $\{x_1, \ldots, x_n\} \setminus \{x_{i_1}, \ldots, x_{i_k}\}$. 
\end{itemize}
\end{definition}

\begin{example}
For example, if $f(x_1,x_2,x_3) = (x_1 + x_2)x_3$, 
\[
f(x_1, x_2, x_3) =
\begin{cases}
0, & x_3 = 0\\
x_1 + x_2, & x_3 \not= 0\\
\end{cases}
\] 
and $x_1 + x_2$ is noncanalizing. 
Therefore $f$ has canalizing depth 1.
\end{example}

\begin{remark}\label{rem:multiplecanalizing} 
Since $g$ in Definition~\ref{def:can_depth} is noncanalizing, every function has a single well-defined canalizing depth.
In particular, a function of depth two is not considered to have depth one.
\end{remark}

\begin{definition}\label{def:nested}
We say that a canalizing Boolean function $f(x_1,\ldots,x_n)$ is \emph{nested} if $f$ has canalizing depth $n$; that is, $g = 0$ or $g = 1$ (see Definition~\ref{def:can_depth}). 
For example, $f(x_1,x_2,x_3)= x_1x_2x_3$ is nested canalizing because 
\[
f=
\begin{cases}
0 & x_3=0\\
0 & x_3 \not= 0, x_2= 0\\
0 & x_2,x_3\neq 0, x_1 = 0\\
1 & x_1,x_2,x_3\neq 0
\end{cases}
\] so the canalizing depth of $f$ is 3, which is equal to $n = 3$.
\end{definition}

\begin{definition}\label{def:network_depth}
We say that a Boolean network $\mathbf{f} = (f_1, \ldots, f_n)$ has \emph{canalizing depth} $k$ if $f_1, \ldots, f_n$ are Boolean functions of canalizing depth $k$.
\end{definition}

%%%%%%%%%%%%%%%%%%%%%%%%

\section{Simulations: outline of the algorithms}\label{sec:algorithms}

In our computational experiment, we generated random Boolean networks of various canalizing depths.
For each network, we store a list of pairs $(a_i, b_i)$, where $a_i$ is the size of the $i$-th attractor of the network, and $b_i$ is the size of its basin. 
The generated data is available at~\url{https://github.com/MathTauAthogen/Canalizing-Depth-Dynamics/tree/master/data}.
To generate the data, we used two algorithms: one for generating a random Boolean network of a given canalizing depth and one for finding the sizes of attractors and their basins.

\subsection{Finding the sizes of the attractors and their basins}

\begin{algorithm}[H]
%%%%%%%%%%%%%%%%%%%%%%%%%%%%%%%%%%%%%%%%%%%%%%
\caption{Finding the sizes of the attractors and their basins}\label{alg:sizes}
\vspace{0.1in}
\begin{description}[leftmargin=2.5em]
  \item[In:] A Boolean network $\mathbf{f} = (f_1, \ldots, f_n)$ in $n$ variables 
 \item[Out:]
     A list of pairs $(a_i, b_i)$, where $a_i$ is the size of the $i$-th attractor of $\mathbf{f}$ and $b_i$ is the size of its basin
\end{description}

\vspace{-0.2in}

\begin{enumerate}[leftmargin=!,labelwidth=1.5em,itemsep=0.0in, label=\arabic*]

\item{[Network $\to$ Graph]} Build a directed graph $G$ with $2^n$ vertices corresponding to possible states and a directed edge from $\bm{a}$ to $\mathbf{f}(\bm{a})$ for every $\bm{a} \in \{0, 1\}^n$.

\item{[Attractors]}\label{step:attractors} Perform a depth-first search~\cite[\S~22.3]{CLRS} traversal on $G$ viewed as an undirected graph to detect the unique cycle in each connected component, these cycles are the attractors.

\item{[Basins]}\label{step:basins} For each cycle from Step~\ref{step:attractors}, perform a depth-first search traversal on $G$ with all the edges reversed.
The dfs trees will be the basins.

\item Return the sizes of the attractors and basins found on Steps~\ref{step:attractors} and~\ref{step:basins}.
\end{enumerate}
\end{algorithm}

\subsection{Generating random Boolean functions of a given canalizing depth}

\cite[Section~5]{LDM12} contains a sketch of an algorithm for generating random Boolean functions that have canalizing depth at least $k$ for a given $k$.
Here, we generate functions of canalizing depth equal to $k$ and take a different approach than~\cite{LDM12}.
In order to ensure that the probability distribution of possible outputs is uniform, we use the following structure theorem due to He and Macaulay~\cite{HM16}.

\begin{thm}[{{\cite[Theorem~4.5]{HM16}}}]\label{thm:k_canalizing}
Every Boolean function $f(x_1,\ldots,x_n)\not\equiv 0$ can be uniquely written as
\begin{equation}\label{eqn:layers}
f(x_1,\ldots,x_n)=M_1(M_2(\cdots(M_{r-1}(M_rp_C+1)+1)\cdots)+1)+b,
\end{equation}
where $\displaystyle M_i=\prod_{j = 1}^{k_i}(x_{i_j} + a_{i_j})$ for every $1 \leqslant i \leqslant r$, $p_C\not\equiv 0$ is a noncanalizing function, and $k = \sum\limits_{i = 1}^r k_i$ is the canalizing depth. 
Each $x_i$ appears in exactly one of $\{M_1, \dots, M_r, p_C\}$, and the only restrictions on Eq.~\eqref{eqn:layers} are the following ``exceptional cases'':
\begin{enumerate}[label = (E\arabic*), itemsep=0pt, topsep=1pt]
\item\label{e1} If $p_C\equiv 1$ and $r\neq 1$, then $k_r\geq 2$;
\item\label{e2} If $p_C\equiv 1$ and $r=1$ and $k_1=1$, then $b=0$.
\end{enumerate}
\end{thm}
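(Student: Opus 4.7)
The plan is to prove existence and uniqueness of the decomposition by induction on the number of variables $n$, using an iterative layer-extraction procedure. The central identity is
\[
f = (x + a + 1) f\big|_{x = a} + (x + a) f\big|_{x = \bar{a}},
\]
valid in the ring $\mathbb{F}_2[x_1,\ldots,x_n]/(x_i^2+x_i)$ of multilinear polynomials (equivalently, Boolean functions). It shows that whenever $f|_{x=a}\equiv b$, the factor $(x+a)$ divides $f+b$ in this ring.

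For \textbf{existence}, the base case is $f$ noncanalizing: set $r=0$ and take $p_C := f$, $b := 0$ (the case $f\equiv 1$ is allowed because we only forbid $f\equiv 0$). For the inductive step, assume $f$ is canalizing. A key observation is that all ``immediately canalizing'' pairs $(x_i, a_i)$ (those for which $f|_{x_i=a_i}$ is constant) must share a common canalized value $b_1$: if two such pairs had distinct canalized outputs $b\neq b'$, setting both variables to their canalizing values would simultaneously force $f=b$ and $f=b'$. Let $L_1$ be the set of all such pairs and $M_1 := \prod_{(x_{i_j}, a_{i_j})\in L_1}(x_{i_j}+a_{i_j})$. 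Iterating the factorization identity, one obtains $f+b_1 = M_1 \cdot h_1$, where $h_1$ depends only on the variables not occurring in $L_1$. Applying the inductive hypothesis to a suitable transformation of $h_1$ (namely $h_1+1$, so that the roles of canalized values flip at the next layer) yields the remaining layers $M_2,\ldots,M_r$ and the noncanalizing core $p_C$.

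For \textbf{uniqueness}, the shared-canalized-value observation forces both $b_1$ and $L_1$ from $f$ alone, so $M_1$ is uniquely determined; the quotient $h_1=(f+b_1)/M_1$ is then forced, and induction completes the argument. The \textbf{exceptional cases} (E1) and (E2) arise because when $p_C\equiv 1$, the product $M_r p_C$ collapses to $M_r$, so $M_r$ could be erroneously absorbed into an outer layer or, when $r=k_1=1$, into the constant term $b$; the stated conditions are precisely what is needed to exclude these alternate normal forms and pin down a unique representation.

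The \textbf{main obstacle} I anticipate is rigorously justifying the factorization $(f+b_1) = M_1 \cdot h_1$ and showing the quotient is well-defined and independent of the variables in $L_1$. This requires treating the linear factors $(x_{i_j}+a_{i_j})$ for distinct $j$ as pairwise complementary idempotents in the multilinear ring, and showing inductively that successive factoring by single factors does not reintroduce dependence on the variables already extracted. Concretely, one shows that $f|_{x_{i_1}=a_{i_1}}=b_1$ implies $f+b_1 = (x_{i_1}+a_{i_1}) g$ with $g$ independent of $x_{i_1}$, after which the restriction $f|_{x_{i_2}=a_{i_2}}=b_1$ transfers to $g|_{x_{i_2}=a_{i_2}}=0$ (since $(x_{i_1}+a_{i_1})$ is not identically zero), allowing the next factor to be extracted cleanly.
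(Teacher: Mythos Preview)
The paper does not prove this theorem: it is quoted from \cite[Theorem~4.5]{HM16} (He--Macauley) and used as a black box to justify the correctness of Algorithm~\ref{alg:k_canalizing}. There is no in-paper argument to compare your proposal against.

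That said, your iterative layer-extraction strategy is the natural route and is essentially how the result is established in the cited source: collect all outermost canalizing variables into $M_1$, factor them out via the idempotent identity $f=(x+a+1)\,f|_{x=a}+(x+a)\,f|_{x=\bar a}$ in $\mathbb{F}_2[x_1,\dots,x_n]/(x_i^2+x_i)$, and recurse on the quotient. One point to tighten: your claim that all immediately canalizing pairs share a common canalized output is argued by substituting two canalizing values simultaneously, which is impossible when both pairs involve the \emph{same} variable with opposite inputs. That happens exactly when $f$ depends on a single variable, and is precisely the situation the exceptional condition~(E2) is designed to disambiguate; you gesture at this but should handle it explicitly rather than under the general case. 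The factorization step you flag as the main obstacle is routine once you use that each $(x_{i_j}+a_{i_j})$ is idempotent and that distinct such factors commute, so the quotient $h_1$ is well-defined and independent of the extracted variables.
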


\begin{example}
Consider $f(x_1,x_2,x_3,x_4) = x_1(x_2 + 1)(x_3x_4 + x_3 + x_4)$ can be represented as 
\[
f = ((x_1 + 0)(x_2 + 1))(((x_3 + 1)(x_4 + 1))(1) + 1) + 0, 
\]
so $M_1=(x_1+0)(x_2+1)$, $M_2=(x_3+1)(x_4+1)$, $b=0$, $k=4$, and $p_C=1$. 
This can be verified by expanding the brackets in the original and new representations of $f$.

Consider $g(x_1,x_2,x_3,x_4,x_5) = 1 + x_5 (x_1 + x_2)(x_3 + 1)x_4$.
It can be represented as 
\[
g = (x_5+0)(((x_3+1)(x_4+0))(x_1 + x_2) + 1) + 1, 
\]
so $M_1=(x_5+0)$, $M_2=(x_4+0)(x_3+1)$, $b=1$, $k=3$, and $p_C=x_1 + x_2$.
\end{example}

Our algorithm is summarized in Algorithms~\ref{alg:k_canalizing} and~\ref{alg:partition} below. 
Correctness of Algorithm~\ref{alg:k_canalizing} follows from Theorem~\ref{thm:k_canalizing}, and correctness of Algorithm~\ref{alg:partition} can be proved directly by induction on $k$.

\begin{remark}
The complexity of Algorithm~\ref{alg:k_canalizing} is $\mathrm{O}(n2^n)$ (see Proposition~\ref{prop:alg_k_can}). 
Given that the size of the output is $\mathrm{O}(2^n)$, this is nearly optimal.

We measured the runtimes of our implementation of Algorithm~\ref{alg:k_canalizing} on a laptop with a Core i5 processor (1.60GHz) and 8Gb RAM.
Generating a single function with $20$ variables (the largest number we used in our simulations) takes $4.9-5.5$ seconds (faster for smaller canalizing depth).
On a laptop, our implementation can go up to $24$ variables ($\sim 2$ minutes to generate a function), and then hits memory limits.
One can go further by using a lower level language and more careful packing.
However, already a Boolean function in $40$ variables would require at least 128Gb of memory.
\end{remark}

\begin{algorithm}[H]
%%%%%%%%%%%%%%%%%%%%%%%%%%%%%%%%%%%%%%%%%%%%%%
\caption{Generating a random Boolean function of a given canalizing depth}\label{alg:k_canalizing}
\begin{description}[leftmargin=2.5em]
  \item[In:] Nonnegative integers $k$ and $n$ with $k \leqslant n$ 
 \item[Out:] A Boolean function $f$ in $n$ variables of canalizing depth $k$ such that, for fixed $k$ and $n$, all possible outputs have the same probability
\end{description}

\vspace{-0.2in}

\begin{enumerate}[leftmargin=!,labelwidth=1.5em,itemsep=0.0in, label=\arabic*]

\item\label{step:generate_data} In the notation of Theorem~\ref{thm:k_canalizing}, generate the following:
  \begin{enumerate}[itemsep=0pt]
      \item random bits $b, a_1, \ldots, a_n \in \{0, 1\}$;
      \item a random subset $X \subset \{x_1, \ldots, x_n\}$ with $|X| = k$;
      \item a random ordered partition $X = X_1 \sqcup \ldots \sqcup X_r$ of $X$ (using Algorithm~\ref{alg:partition});
      \item a random noncanalizing function $p_C \not\equiv 0$ in variables $\{x_1, \ldots, x_n\} \setminus X$ (see Remark~\ref{rem:noncanalizing}).
  \end{enumerate}
  
  \item\label{alg_step:form_function} Form a function $f(x_1, \ldots, x_n)$ using the data generated in Step~\ref{step:generate_data} as in Theorem~\ref{thm:k_canalizing} where $M_i$ involves exactly the variables from $X_i$ for every $1 \leqslant i \leqslant r$.
  
  \item\label{alg_step:check} If $f$ does not satisfy any of the conditions~\ref{e1} or~\ref{e2}, discard it and run the algorithm again.
  Otherwise, return $f$.
\end{enumerate}
\end{algorithm}

\begin{remark}\label{rem:noncanalizing}
 We generate a random noncanalizing function as follows. 
 We generate a random Boolean function and test for canalization until we generate a noncanalizing one.
 Then we return it.
 Since canalizing functions are rare~\cite[Section~5]{HM16}, this algorithm is fast enough for our purposes (see Lemma~\ref{lem:complexity_generate_noncanalizing}).
\end{remark}
 
\begin{algorithm}[H]
%%%%%%%%%%%%%%%%%%%%%%%%%%%%%%%%%%%%%%%%%%%%%%
\caption{Generating a random ordered partition of a given finite set}\label{alg:partition}
\begin{description}[leftmargin=2.5em, itemsep=0pt]
  \item[In:] A finite set $X$ with $|X| = k$
 \item[Out:] An ordered partition $X = X_1 \sqcup \cdots \sqcup X_r$ into nonempty subsets $X_1,\ldots, X_r$ such that, for a fixed $X$, all possible outputs have the same probability
\end{description}

\vspace{-0.2in}

\begin{enumerate}[leftmargin=!,labelwidth=1.5em,itemsep=0.0in, label=\arabic*]

    \item\label{alg_step:compute_pk} Compute $p_0, \ldots, p_k$, where $p_i$ is the number of ordered partitions of a set of size $i$, using the recurrence $p_j=\sum\limits_{i = 0}^{j - 1}\binom{j}{i}p_{j - i}$, $p_0=1$ (see~\cite[Eq. (9)]{fubini}).
    
    \item\label{alg_step:gen_rnd} Generate an integer $N$ uniformly at random from $[1, p_k]$.

    \item\label{alg_step:find_j} Find the minimum integer $j$ between $1$ and $k$ such that $\sum\limits_{i = 0}^{j - 1} \binom{k}{i}p_{k - i} \geq N$.
    
    \item\label{alg_step:select_subset} Randomly select a subset $X_1 \subset X$ of size $j$. 
    
    \item Generate an ordered partition $X_2 \sqcup \cdots \sqcup X_r$ of $X \setminus X_1$ recursively.
    
    \item Return $X_1 \sqcup \cdots \sqcup X_r$.

\end{enumerate}
\end{algorithm}

%%%%%%%%%%%%%%%%%%%%%%%%%%%%%%%%%%%

\section{Simulations: results}\label{subsec:results}

\begin{notation}
For a Boolean network $\mathbf{f} = (f_1, \ldots, f_n)$, let $N(\mathbf{f})$ and $S(\mathbf{f})$ denote the number of the attractors of $\mathbf{f}$ and the sum of the sizes of the attractors of $\mathbf{f}$, respectively.
We define the average size of an attractor as $\AS(\mathbf{f}) := \frac{S(\mathbf{f})}{N(\mathbf{f})}$.
\end{notation}

%%%

\subsection{Sample means of $N(\mathbf{f})$ and $\AS(\mathbf{f})$}\label{subsec:means}

For every $n = 4, \ldots, 20$ and every $0 \leqslant k \leqslant n$, we generate random Boolean networks in $n$ variables of canalizing depth $k$ and compute the mean of $N(\mathbf{f})$ and $\AS(\mathbf{f})$.
Figure~\ref{fig:means} shows how these means depend on $k$ for $n = 15$ (based on $50000$ samples for each $k$).
The shape of the plots is similar for other values of $n$ we did computation for (that is, $n = 4, \ldots, 20$).
Note that although both means are decreasing, the decrease of the mean of $\AS(\mathbf{f})$ is more substantial.

\begin{figure}[H]
  \centering
  \subfloat[The number of attractors ($N(\mathbf{f})$)]{\label{fig:num_att}\includegraphics[width=0.47\textwidth]{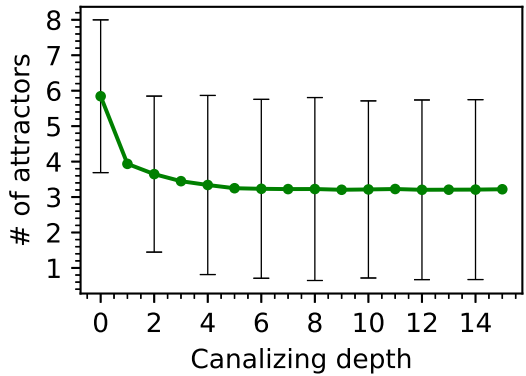}}
  \hspace{7mm}
  \subfloat[Average size of an attractor ($\AS(\mathbf{f})$)]{\label{fig:avg_num_att}\includegraphics[width=0.47\textwidth]{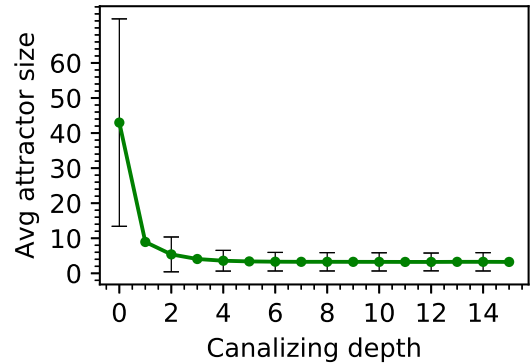}}
  
  \caption{Dependencies of the sample means of $N(\mathbf{f})$  and $\AS(\mathbf{f})$ on the canalizing depth}\label{fig:means}
\end{figure}

%%%

\subsection{Distributions of $N(\mathbf{f})$ and $\AS(\mathbf{f})$}\label{subsec:freq}

Figure~\ref{fig:freq} shows the empirical distributions of $N(\mathbf{f})$ and $\AS(\mathbf{f})$ for $n = 12$ and $k = 0, 1, 3, 12$ based on $300000$ samples for each $k$.
From the plot, we can make the following observations.
\begin{itemize}
    \item The distributions become more concentrated and the peak shifts towards zero when $k$ increases.
    \item 
    The distributions for nonzero canalizing depths (especially for larger depths) are much closer to each other that to the distribution for zero canalizing depth.
    This agrees with the plots on Figure~\ref{fig:means}.
\end{itemize}

\begin{figure}[H]
  \centering
  \subfloat[Distribution of the number of attractors ($N(\mathbf{f})$)]{\includegraphics[width=0.45\textwidth]{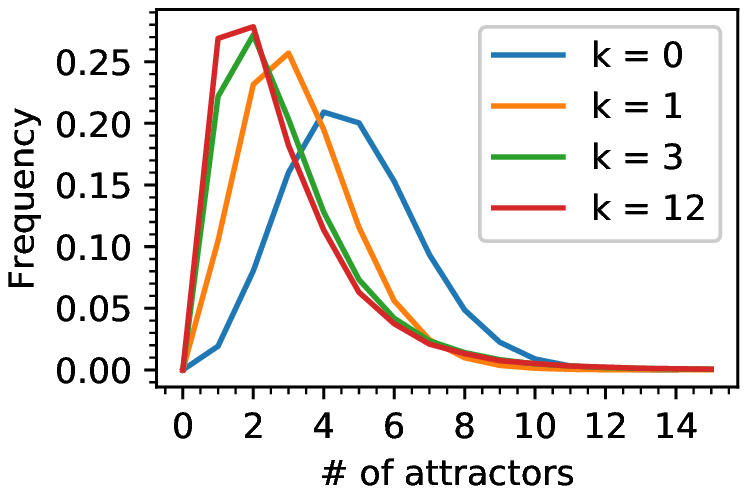}}
  \hspace{7mm}
  \subfloat[Distribution of the average size of an attractor~($\AS(\mathbf{f})$)]{\includegraphics[width=0.45\textwidth]{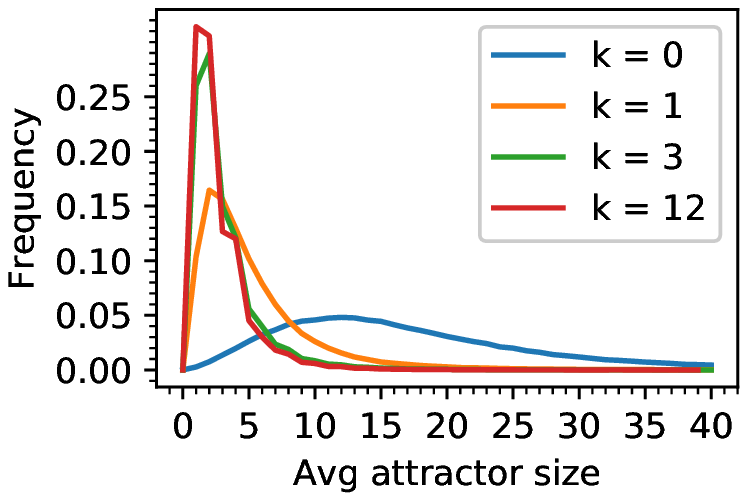}}
  
  \caption{Empirical distributions of $N(\mathbf{f})$  and $\AS(\mathbf{f})$ for $n = 12$ and $k = 0, 1, 3, 12$}\label{fig:freq}
\end{figure}

%%%

\subsection{Relative decreases}\label{subsec:ratio}

From Figure~\ref{fig:means}, we can observe that, for both $N(\mathbf{f})$ and $\AS(\mathbf{f})$, the sample mean decreases rapidly for small canalizing depths.
In order to understand how this decrease behaves for large $n$, we introduce
\[
N_{k}(n) := \frac{\text{the sample mean of $N(\mathbf{f})$ for $n$ variables and canalizing depth $k$}}{\text{the sample mean of $N(\mathbf{f})$ for $n$ variables and canalizing depth $0$}}.
\]
$\AS_{k}(n)$ is defined analogously.
Figure~\ref{fig:drop} plots $N_{1}(n)$, $N_{2}(n)$, $N_{3}(n)$, $N_n(n)$ and $\AS_1(n)$, $\AS_2(n)$, $\AS_3(n)$, $\AS_n(n)$ as functions of~$n$.
From the plots we see that
\begin{itemize}
    \item the relative initial decease from canalizing depth $0$ to canalizing depth $1$ becomes even more substantial when $n$ increases;
    \item the relative decrease from canalizing depth $0$ to canalizing depth $3$ is already very close to the relative decrease from depth zero to the maximal depth (i.e., nested canalizing functions).
\end{itemize}

\begin{figure}[H]
  \centering
  \subfloat[Relative decrease of the number of attractors]{\label{fig:num_att}\includegraphics[width=0.45\textwidth]{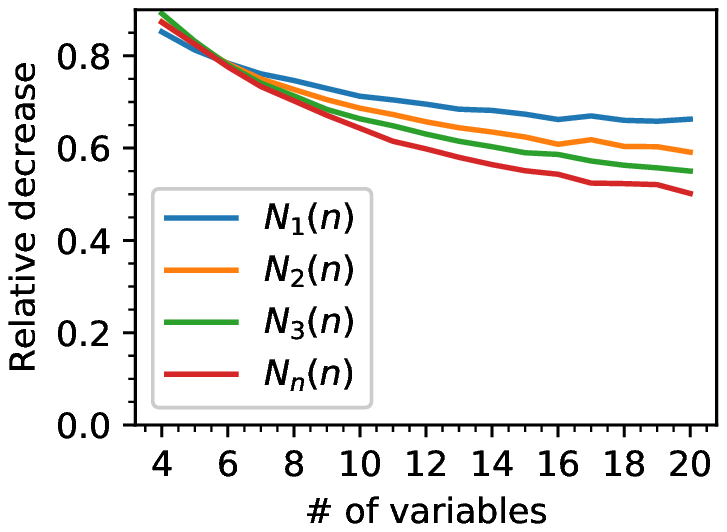}}
  \hspace{7mm}
  \subfloat[Relative decrease of the average size of an attractor]{\label{fig:avg_num_att}\includegraphics[width=0.45\textwidth]{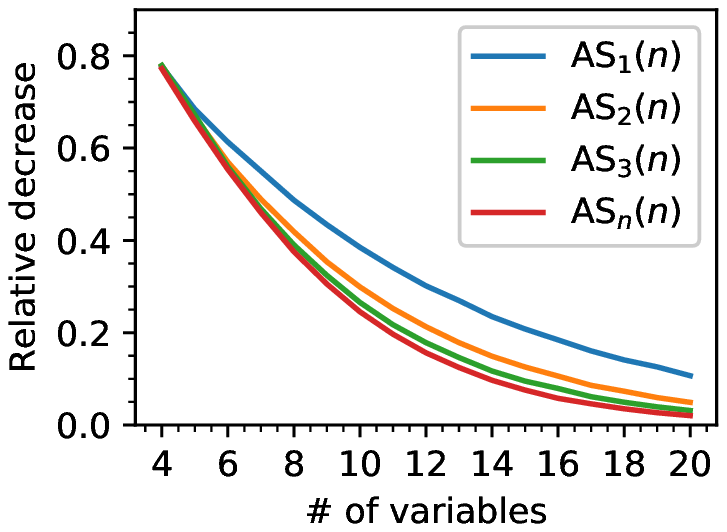}}
  
  \caption{Dependence of the relative decreases of the sample means of $N(\mathbf{f})$  and $\AS(\mathbf{f})$ on the number of variables $n$}\label{fig:drop}
\end{figure}

%%%%%%%%%%%%%%%%%%%%%%%%%%%%%%%%%%%

\section{Theory: the main result}\label{sec:th_result}

We will introduce notation needed to state the main theorem.
Let us fix a positive integer $\ell$.
For a binary string $\alpha \in S := \{0, 1\}^\ell$, we define:
\begin{itemize}
    \item $|\alpha|$ denotes the number of ones;
    \item $\bar{\alpha}$ denotes component-wise negation;
    \item $s(\alpha)$ denotes a cyclic shift to the right.
\end{itemize}
For binary strings $\alpha, \beta \in \{0, 1\}^\ell$ we define
\[
f(\alpha, \beta) := \begin{cases}
  \frac{1}{2^{|\beta|}}, \text{ if } \alpha \vee \beta = \beta,\\
  0, \text{otherwise}
\end{cases}
\quad\text{ and }\quad
g(\alpha, \beta) := \frac{1}{4}(f(\alpha, \beta) + f(\bar{\alpha}, \beta) + f(\alpha, \bar{\beta}) + f(\bar{\alpha}, \bar{\beta})).
\]
Then we define a $2^\ell \times 2^\ell$ matrix $G_\ell$ by
\begin{equation}\label{eq:G_ell}
  (G_\ell)_{a, b} = g(a, s(b)),    
\end{equation}
where we interpret numbers $1 \leqslant a, b \leqslant 2^\ell$ as binary sequences of length $\ell$.

\begin{theorem}\label{thm:main}
  Let $A_\ell$ be the limit of the expected number of attractors of length $\ell$ in a random Boolean network of canalizing depth one (see Definition~\ref{def:network_depth}) when the number of variables $n$ goes to infinity.
  Then
  \[
  A_\ell = \frac{1}{\ell P_{G_\ell}'(1)},
  \]
  where $P_{G_\ell}$ is the characteristic polynomial of matrix $G_\ell$ introduced above.
  In particular, we have
  \begin{align*}
      A_1 &= 1, &A_4 &= 0.2856\ldots\\
      A_2 &= \frac{2}{3} = 0.666\ldots, &A_5 &= 0.2004\ldots\\
      A_3 &= \frac{64}{189} = 0.3386\ldots, & A_6 &= 0.1721\ldots.\\
  \end{align*}
\end{theorem}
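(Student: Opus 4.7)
The plan is to express $\ell \cdot \mathbb{E}[\#\text{attractors of length }\ell]$ as a sum over ordered tuples of distinct states, approximate each summand by a combinatorial kernel built from $g$, and then recognize the resulting limit as a product over eigenvalues of $G_\ell$ via the Poisson cycle structure of a uniformly random function on $[n]$. First, since the coordinate functions $f_j$ are drawn independently,
\[
\ell \cdot \mathbb{E}\!\left[\#\text{attractors of length } \ell\right] = \sum_{(a_1, \ldots, a_\ell) \text{ distinct}} \prod_{j=1}^n P(c_j),
\]
where $c_j = ((a_1)_j, \ldots, (a_\ell)_j) \in \{0,1\}^\ell$ is the $j$-th column of the matrix with rows $a_1, \ldots, a_\ell$, and $P(c_j) = \Pr[f_j(a_i) = (a_{i+1})_j \text{ for all } i]$.

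Second, I would compute $P(c_j)$ asymptotically by parametrizing a random depth-one canalizing $f_j$ by its canalizing variable $k \in [n]$, canalizing value $v$, canalizing output $b$, and noncanalizing remainder $p_C$ on the other $n-1$ variables. For distinct $a_i$'s, splitting $[\ell]$ according to whether $(a_i)_k = v$ reduces the constraint to a value of $p_C$ at the resulting distinct restricted inputs, and averaging over $(v, b) \in \{0,1\}^2$ reproduces exactly the four-term expression defining $g$, yielding
\[
P(c_j) \;=\; \frac{1}{n}\sum_{k=1}^n g(s^{-1}(c_j), c_k) \;+\; O(\ell^2 \cdot 2^{-n}),
\]
where the correction absorbs both the probability of collisions among the restrictions of the $a_i$ away from coordinate $k$ and the mild deviation between a uniform random Boolean function and a uniform random noncanalizing function.

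Third, after expanding the product, interchanging sums, and extending the outer sum to all (not necessarily distinct) tuples at a cost of only $O(2^{-n})$ — tuples with some $a_i = a_{i'}$ have all their columns constrained to a proper affine subspace of $\{0,1\}^\ell$, so contribute with relative weight $O(2^{-n})$ —
\[
\sum_{(a_i)} \prod_{j=1}^n P(c_j) \;\approx\; \frac{1}{n^n}\sum_{(k_1, \ldots, k_n) \in [n]^n}\;\sum_{(c_j) \in (\{0,1\}^\ell)^n} \prod_{j=1}^n g\!\left(s^{-1}(c_j), c_{k_j}\right).
\]
The map $j \mapsto k_j$ defines a functional graph on $[n]$, each of whose connected components contains a unique cycle with rooted trees attached. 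Using the column-sum identity $\sum_\alpha g(s^{-1}(\alpha), \beta) = \sum_{\alpha'} g(\alpha', \beta) = 1$, each non-cycle vertex can be summed out iteratively, contributing the factor $1$. On a cycle of length $m$ with variables $d_1, \ldots, d_m$, the substitution $e_i := s^{-1}(d_i)$ rewrites the cyclic product as $\sum_{(e_i)} \prod_i g(e_i, s(e_{i+1})) = \operatorname{tr}(G_\ell^m)$. Therefore the inner double sum equals $\prod_C \operatorname{tr}(G_\ell^{m_C})$, the product being over connected components $C$ with cycle lengths $m_C$.

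Fourth, I would invoke the classical fact that the cycle counts $X_m$ of a uniformly random function $[n] \to [n]$ converge jointly, as $n \to \infty$, to independent Poisson$(1/m)$ random variables. Together with a uniform-integrability argument (afforded by $G_\ell$ being column-stochastic with simple principal eigenvalue $1$, so that $\operatorname{tr}(G_\ell^m) - 1 = \sum_{\lambda \neq 1} \lambda^m$ decays geometrically), this gives
\[
\frac{1}{n^n}\sum_{(k_j)} \prod_C \operatorname{tr}(G_\ell^{m_C}) \;\longrightarrow\; \exp\!\left(\sum_{m \geq 1} \frac{\operatorname{tr}(G_\ell^m) - 1}{m}\right) = \exp\!\left(\sum_{\lambda \neq 1} \sum_{m \geq 1} \frac{\lambda^m}{m}\right) = \prod_{\lambda \neq 1} \frac{1}{1 - \lambda} = \frac{1}{P_{G_\ell}'(1)}.
\]
Dividing by $\ell$ produces the formula for $A_\ell$. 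The step I expect to be the main obstacle is the passage to the limit in the Poissonization: while the weak convergence of cycle counts is standard, one must rule out a potentially uncontrolled contribution from functional graphs with moderately large total cycle length, and simultaneously propagate the $O(\ell^2 \cdot 2^{-n})$ error in $P(c_j)$ through the product over all $n$ coordinates so that the cumulative error remains $o(1)$.
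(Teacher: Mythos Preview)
Your approach is correct in outline and genuinely different from the paper's. Both routes land on the same intermediate object: after your expansion in step~3, the double sum $\frac{1}{n^n}\sum_{(k_j)}\sum_{(c_j)}\prod_j g(s^{-1}(c_j),c_{k_j})$ is exactly the quantity the paper calls $C(G_\ell)_n$ (Lemma~\ref{lem:lim}), up to the shift-invariance $g(s^{-1}\alpha,\beta)=g(\alpha,s\beta)$. The paper then evaluates $\lim_n C(G_\ell)_n$ analytically: a multivariable Abel-type identity of Carlitz gives the exact generating function $\sum_n\frac{n^n C(G_\ell)_n}{n!}z^n=\det(I-y(z)G_\ell)^{-1}$, with $y$ the tree function (Lambert $W$), and the limit is read off by singularity analysis at $z=1/e$. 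Your route is the probabilistic dual: interpret $(k_j)$ as a uniform random map $[n]\to[n]$, peel off tree vertices using column-stochasticity, reduce each cycle to $\operatorname{tr}(G_\ell^{m})$, and pass to the Poisson limit for cycle counts. These are two faces of one identity: the exponential formula for ``cycles of rooted trees, each cycle weighted by $\operatorname{tr}(G_\ell^{m})$'' gives precisely $\exp\bigl(\sum_m\operatorname{tr}(G_\ell^m)T(z)^m/m\bigr)=\det(I-T(z)G_\ell)^{-1}$, which is Carlitz's formula specialized to column-stochastic matrices. Your approach buys a transparent combinatorial picture; the paper's buys a rigorous asymptotic without any tail estimate.

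Two steps need more than you wrote. First, your ``relative weight $O(2^{-n})$'' for non-distinct tuples is not a counting argument: the summands are weighted by $\prod_j g(\cdot)$, and you need that the column sums of $G_\ell$ restricted to each diagonal $S_{i,i'}=\{\alpha:\alpha_i=\alpha_{i'}\}$ are bounded away from $1$, so that the restricted sum decays geometrically---this is precisely Lemma~\ref{lem:stochastic}(2) in the paper, and is what makes the tree-peeling lose a factor $<1$ at every vertex. Second, the uniform-integrability obstacle you flag is real and not resolved by your sketch; the cleanest fix is to note that the exponential-formula identity above is exact for every $n$, so one can extract the limit by singularity analysis (as the paper does) rather than by Poisson convergence, sidestepping the tail issue entirely. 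For either route you also need $|\lambda|<1$ for all non-Perron eigenvalues (so that $\sum_m\lambda^m/m$ converges and $z=1/e$ is the unique dominant singularity); this follows from primitivity of $G_\ell^T$, which holds because $g(\mathbf{0},s(\mathbf{0}))>0$ gives a positive diagonal entry.
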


\begin{remark}
  The plots below show that the result of Theorem~\ref{thm:main} agrees with our simulations
  \begin{figure}[H]
  \centering
  \subfloat[Length 1]{\includegraphics[width=0.45\textwidth]{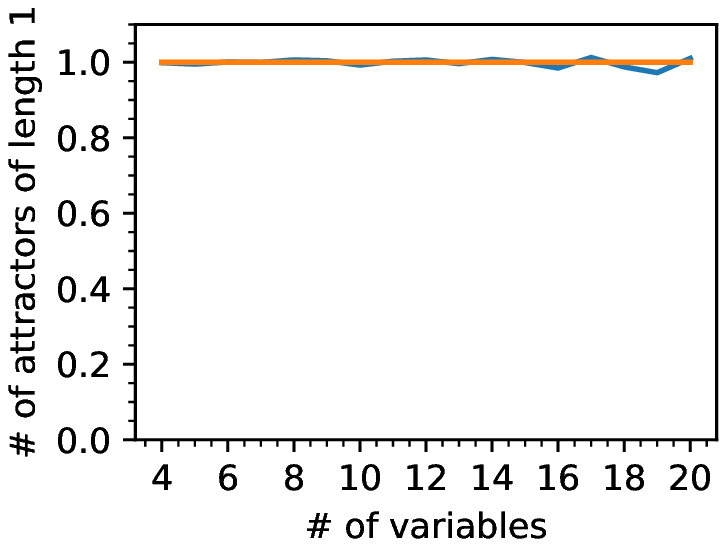}}
  \hspace{7mm}
  \subfloat[Length 2]{\includegraphics[width=0.45\textwidth]{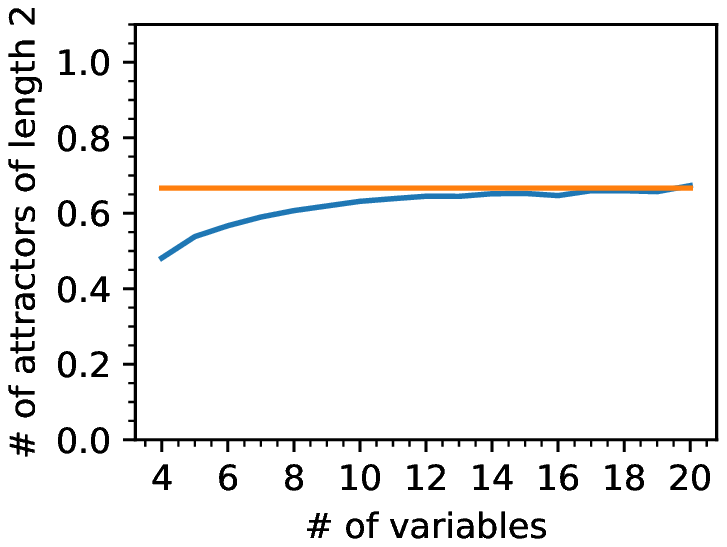}}
  
  \subfloat[Length 3]{\includegraphics[width=0.45\textwidth]{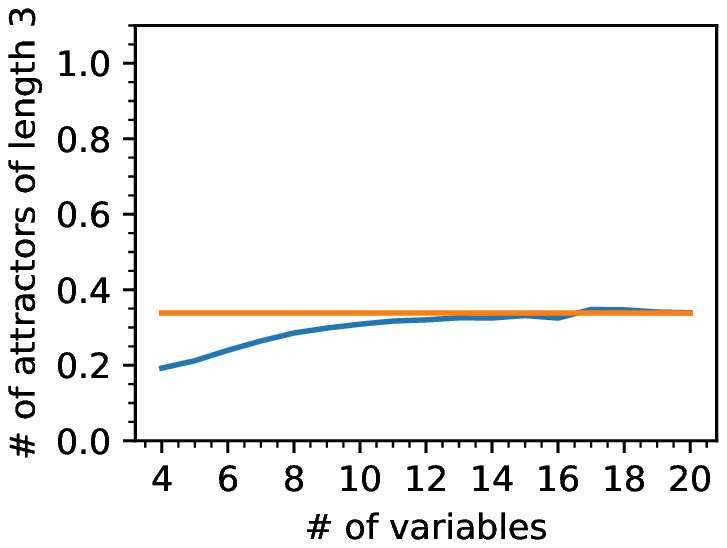}}
  \hspace{7mm}
  \subfloat[Length 4]{\includegraphics[width=0.45\textwidth]{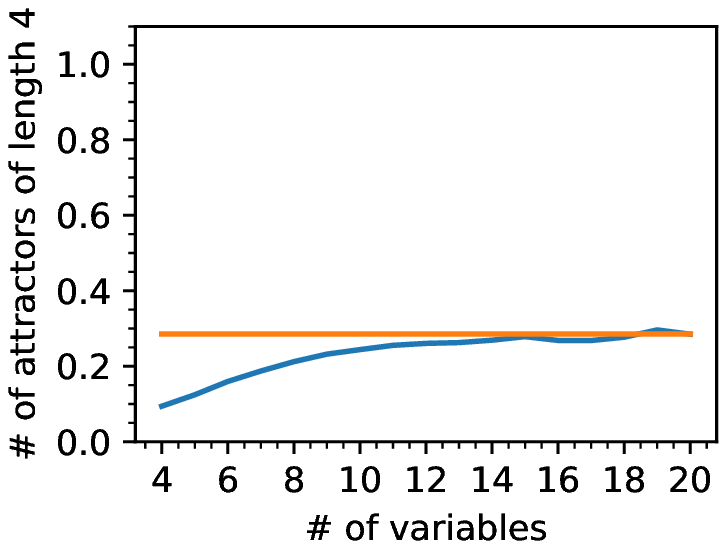}}

  \caption{The average number of attractors of fixed length (blue plot) compared to the limiting value from Theorem~\ref{thm:main} (orange plot)}
\end{figure}
\end{remark}

\begin{remark}\label{rem:at_least_one}
  As explained in Remark~\ref{rem:distr_C}, Theorem~\ref{thm:main} stills holds if we replace a random Boolean network of canalizing depth one with a random Boolean network defined by canalizing functions.
\end{remark}

\begin{example}
  Let $\ell = 2$.
  Then, for example, we have $f(0, 2) = f(0, 1) = \frac{1}{2}$ and $g(0, 1) = g(3, 1) = \frac{1}{4}$.
  In total, we have
  \[
    G_2 = \begin{pmatrix}
      3/8 & 1/4 & 1/4 & 3/8 \\
      1/8 & 1/4 & 1/4 & 1/8 \\
      1/8 & 1/4 & 1/4 & 1/8 \\
      3/8 & 1/4 & 1/4 & 3/8
    \end{pmatrix}
    \quad\text{ and } \quad P_{G_2}(t) = t^4 - \frac{5}{4}t^3 + \frac{1}{2}t^2.
  \]
\end{example}

\begin{remark}\label{rem:compare_with_depth_zero}
  Theorem~\ref{thm:main} and Corollary~\ref{cor:less_than_one} imply that $A_\ell > \frac{1}{\ell}$ for every $\ell > 1$.
  On the other hand, a direct computation shows that the expected number of attractors of length $\ell$ in a random Boolean network (without any canalization requirements) is $\frac{1}{\ell}$.
  This is consistent with our observations from Section~\ref{subsec:means}. 
\end{remark}

\begin{remark}
  A Sage script for computing numbers $A_\ell$ is available at \url{https://github.com/MathTauAthogen/Canalizing-Depth-Dynamics/blob/master/core/theory.sage}.
\end{remark}

%%%%%%%%%%%%%%%%%%%%%%%%%%%%%%%%%%%%%%%%%%%%%%%%%%%%%%%

\section{Conclusions}\label{sec:conclusions}
 
We conducted computational experiments to investigate the attractor structure of Boolean networks defined by functions of varying canalizing depth. 
We observed that networks with higher canalizing depth tend to have fewer attractors and the sizes of the attractors decrease dramatically when the canalizing depth increases moderately.
As a consequence, the basins tend to grow when the canalizing depth increases.
These properties are desirable in many biological applications of Boolean networks, so our results give new indications of the biological utility of Boolean networks defined by functions of positive canalizing depth.

We proved a theoretical result, Theorem~\ref{thm:main}, which complements the above observation as follows. 
The theorem implies that a large random Boolean network of canalizing depth one has on average more attractors of small size than a random Boolean network of the same size although it has less attractors in total. 
This also provides an explanation to the fact that the total size of attractors decreases faster than the number of attractors as the canalizing depth grows.

Furthermore, we observed that all the statistics we computed are almost the same in the case of the maximal possible canalizing depth (so-called nested canalizing Boolean networks) and in the case of moderate canalizing depth. This agrees with the results of Layne, Dimitrova, and Macauley~\cite{LDM12}.This observation elucidates an interesting and powerful feature of canalization: even a very moderate canalizing influence
in a Boolean network has a strong constraining influence on network dynamics. It would be of interest to explore the prevalence of this features in published Boolean network models. 
 
Finally, we provided evidence that the observed phenomena will occur for Boolean networks with larger numbers of state variables.

\subsection*{Acknowledgments} 
The authors are grateful to Claus Kadelka, Christian Krattenthaler, and Doron Zeilberger for helpful discussions and to the referees for valuable suggestions and comments.
 GP was partially supported by NSF grants CCF-1564132, CCF-1563942, DMS-1760448, DMS-1853482, and DMS-1853650 by
PSC-CUNY grants \#69827-0047 and \#60098-0048.
RL was partially supported by Grants NIH 1U01EB024501-01 and NSF CBET-1750183.
EP, GP, and WQ are grateful to the New York Math Circle where their collaboration started.

\subsection*{Data Availability Statement}

Python/sage code and the results of simulations used to support the findings of this study have been deposited at \url{https://github.com/MathTauAthogen/Canalizing-Depth-Dynamics}.

\bibliographystyle{abbrvnat}
\bibliography{bibdata}

\begin{thebibliography}{25}
\providecommand{\natexlab}[1]{#1}
\providecommand{\url}[1]{\texttt{#1}}
\expandafter\ifx\csname urlstyle\endcsname\relax
  \providecommand{\doi}[1]{doi: #1}\else
  \providecommand{\doi}{doi: \begingroup \urlstyle{rm}\Url}\fi

\bibitem[Carlitz(1974)]{Carlitz}
L.~Carlitz.
\newblock An application of {M}ac{M}ahon's {M}aster theorem.
\newblock \emph{SIAM Journal on Applied Mathematics}, 26\penalty0 (2):\penalty0
  431--436, 1974.
\newblock URL \url{http://www.jstor.org/stable/2100082}.

\bibitem[Corless et~al.(1996)Corless, Gonnet, Hare, Jeffrey, and
  Knuth]{LabertW}
R.~M. Corless, G.~H. Gonnet, D.~E.~G. Hare, D.~J. Jeffrey, and D.~E. Knuth.
\newblock On the {L}ambert {W} function.
\newblock \emph{Advances in Computational Mathematics}, 5\penalty0
  (1):\penalty0 329--359, 1996.
\newblock URL \url{https://doi.org/10.1007/BF02124750}.

\bibitem[Cormen et~al.(2009)Cormen, Leiserson, Rivest, and Stein]{CLRS}
T.~H. Cormen, C.~E. Leiserson, R.~L. Rivest, and C.~Stein.
\newblock \emph{Introduction to Algorithms}.
\newblock MIT Press, 3 edition, 2009.

\bibitem[Davidich and Bornholdt(2008)]{DB08}
M.~I. Davidich and S.~Bornholdt.
\newblock Boolean network model predicts cell cycle sequence of fission yeast.
\newblock \emph{PLOS ONE}, 3:\penalty0 1--8, 2008.
\newblock URL \url{https://doi.org/10.1371/journal.pone.0001672}.

\bibitem[Drossel(2005)]{Drossel}
B.~Drossel.
\newblock Number of attractors in random boolean networks.
\newblock \emph{Phys. Rev. E}, 72:\penalty0 016110, 2005.
\newblock \doi{10.1103/PhysRevE.72.016110}.
\newblock URL \url{https://link.aps.org/doi/10.1103/PhysRevE.72.016110}.

\bibitem[Flajolet and Sedgewick(2009)]{FS}
P.~Flajolet and R.~Sedgewick.
\newblock \emph{Analytic Combinatorics}.
\newblock Cambridge University Press, New York, NY, USA, 1 edition, 2009.
\newblock ISBN 0521898064, 9780521898065.

\bibitem[Flajolet et~al.(1995)Flajolet, Grabner, Kirschenhofer, and
  Prodinger]{RamQFunc}
P.~Flajolet, P.~J. Grabner, P.~Kirschenhofer, and H.~Prodinger.
\newblock On {R}amanujan's {Q}-function.
\newblock \emph{Journal of Computational and Applied Mathematics}, 58\penalty0
  (1):\penalty0 103 --116, 1995.
\newblock URL \url{https://doi.org/10.1016/0377-0427(93)E0258-N}.

\bibitem[Gross(1962)]{fubini}
O.~A. Gross.
\newblock Preferential arrangements.
\newblock \emph{The American Mathematical Monthly}, 69\penalty0 (1):\penalty0
  4--8, 1962.
\newblock URL \url{http://www.jstor.org/stable/2312725}.

\bibitem[He and Macauley(2016)]{HM16}
Q.~He and M.~Macauley.
\newblock Stratification and enumeration of boolean functions by canalizing
  depth.
\newblock \emph{Physica D: Nonlinear Phenomena}, 314:\penalty0 1--8, 2016.
\newblock URL \url{https://doi.org/10.1016/j.physd.2015.09.016}.

\bibitem[Karlsson and H{\"o}rnquist(2007)]{KH07}
F.~Karlsson and M.~H{\"o}rnquist.
\newblock Order or chaos in {B}oolean gene networks depends on the mean
  fraction of canalizing functions.
\newblock \emph{Physica A: Statistical Mechanics and its Applications},
  384\penalty0 (2):\penalty0 747--757, 2007.
\newblock URL \url{https://doi.org/10.1016/j.physa.2007.05.050}.

\bibitem[Kauffman(1967)]{K67}
S.~Kauffman.
\newblock Sequential {DNA} replication and the control of differences in gene
  activity between sister chromatids -- {A} possible factor in cell
  differentiation.
\newblock \emph{Journal of Theoretical Biology}, 17\penalty0 (3):\penalty0
  483--497, 1967.
\newblock URL \url{https://doi.org/10.1016/0022-5193(67)90108-7}.

\bibitem[Kauffman(1969{\natexlab{a}})]{K69}
S.~Kauffman.
\newblock Metabolic stability and epigenesis in randomly constructed genetic
  nets.
\newblock \emph{Journal of Theoretical Biology}, 22\penalty0 (3):\penalty0
  437--467, 1969{\natexlab{a}}.
\newblock URL \url{https://doi.org/10.1016/0022-5193(69)90015-0}.

\bibitem[Kauffman(1969{\natexlab{b}})]{K69b}
S.~Kauffman.
\newblock Homeostasis and differentiation in random genetic control networks.
\newblock \emph{Nature}, 224:\penalty0 177--178, 1969{\natexlab{b}}.
\newblock URL \url{https://doi.org/10.1038/224177a0}.

\bibitem[Kauffman(1993)]{K93}
S.~Kauffman.
\newblock \emph{The Origins of Order: Self-Organization and Selection in
  Evolution}.
\newblock Oxford University Press, 1993.

\bibitem[Kauffman et~al.(2003)Kauffman, Peterson, Samuelsson, and
  Troein]{KPST03}
S.~Kauffman, C.~Peterson, B.~Samuelsson, and C.~Troein.
\newblock Random {B}oolean network models and the yeast transcriptional
  network.
\newblock \emph{Proceedings of the National Academy of Sciences}, 100\penalty0
  (25):\penalty0 14796--14799, 2003.
\newblock URL \url{https://doi.org/10.1073/pnas.2036429100}.

\bibitem[Kauffman et~al.(2004)Kauffman, Peterson, Samuelsson, and
  Troein]{KPST04}
S.~Kauffman, C.~Peterson, B.~Samuelsson, and C.~Troein.
\newblock Genetic networks with canalyzing {B}oolean rules are always stable.
\newblock \emph{Proceedings of the National Academy of Sciences}, 101\penalty0
  (49):\penalty0 17102--17107, 2004.
\newblock URL \url{https://doi.org/10.1073/pnas.0407783101}.

\bibitem[Layne et~al.(2012)Layne, Dimitrova, and Macauley]{LDM12}
L.~Layne, E.~Dimitrova, and M.~Macauley.
\newblock Nested canalyzing depth and network stability.
\newblock \emph{Bulletin of Mathematical Biology}, 74\penalty0 (2):\penalty0
  422--433, 2012.
\newblock URL \url{https://doi.org/10.1007/s11538-011-9692-y}.

\bibitem[Mai and Liu(2009)]{ML09}
Z.~Mai and H.~Liu.
\newblock Boolean network-based analysis of the apoptosis network:
  {I}rreversible apoptosis and stable surviving.
\newblock \emph{Journal of Theoretical Biology}, 259\penalty0 (4):\penalty0
  760--769, 2009.
\newblock URL \url{https://doi.org/10.1016/j.jtbi.2009.04.024}.

\bibitem[Saez-Rodriguez et~al.(2007)Saez-Rodriguez, Simeoni, Lindquist,
  Hemenway, Bommhardt, Arndt, Haus, Weismantel, Gilles, Klamt, and
  Schraven]{SR07}
J.~Saez-Rodriguez, L.~Simeoni, J.~A. Lindquist, R.~Hemenway, U.~Bommhardt,
  B.~Arndt, U.-U. Haus, R.~Weismantel, E.~D. Gilles, S.~Klamt, and B.~Schraven.
\newblock A logical model provides insights into {T} cell receptor signaling.
\newblock \emph{PLOS Computational Biology}, 3\penalty0 (8):\penalty0 1--11,
  2007.
\newblock URL \url{https://doi.org/10.1371/journal.pcbi.0030163}.

\bibitem[Samuelsson and Troein(2003)]{ST2003}
B.~Samuelsson and C.~Troein.
\newblock Superpolynomial growth in the number of attractors in kauffman
  networks.
\newblock \emph{Phys. Rev. Lett.}, 90:\penalty0 098701, 2003.
\newblock URL \url{https://link.aps.org/doi/10.1103/PhysRevLett.90.098701}.

\bibitem[Serre(2010)]{matrices}
D.~Serre.
\newblock \emph{Matrices: theory and applications}.
\newblock Springer, New York, NY, 2 edition, 2010.
\newblock URL \url{https://doi.org/10.1007/978-1-4419-7683-3}.

\bibitem[Veliz-Cuba and Stigler(2011)]{VS11}
A.~Veliz-Cuba and B.~Stigler.
\newblock Boolean models can explain bistability in the lac operon.
\newblock \emph{Journal of Computational Biology}, 18\penalty0 (6):\penalty0
  783--794, 2011.
\newblock URL \url{https://doi.org/10.1089/cmb.2011.0031}.

\bibitem[Veliz-Cuba et~al.(2014)Veliz-Cuba, Aguilar, Hinkelmann, and
  Laubenbacher]{VAHL14}
A.~Veliz-Cuba, B.~Aguilar, F.~Hinkelmann, and R.~Laubenbacher.
\newblock Steady state analysis of {B}oolean molecular network models via model
  reduction and computational algebra.
\newblock \emph{BMC Bioinformatics}, 15\penalty0 (221), 2014.
\newblock URL \url{https://doi.org/10.1186/1471-2105-15-221}.

\bibitem[Waddington(1942)]{W42}
C.~Waddington.
\newblock Canalization of development and the inheritance of acquired
  characters.
\newblock \emph{Nature}, 150:\penalty0 563--565, 1942.
\newblock URL \url{https://doi.org/10.1038/150563a0}.

\bibitem[Walker and Gelfand(1979)]{canalizing_proportion}
C.~C. Walker and A.~E. Gelfand.
\newblock A system theoretic approach to the management of complex
  organizations: Management by exception, priority, and input span in a class
  of fixed-structure models.
\newblock \emph{Behavioral Science}, 24\penalty0 (2):\penalty0 112--120, 1979.
\newblock URL \url{https://doi.org/10.1002/bs.3830240205}.

\end{thebibliography}

%%%%%%%%%%%%%%%%%%%%%%%%%%%%%%%%%%%%%%%%%%%%%%%%%%%%%%%%%%%%

\section*{Appendix A: Proofs}

     \setcounter{lemma}{0}
     \renewcommand{\thelemma}{A.\arabic{lemma}}
     \setcounter{notation}{0}
     \renewcommand{\thenotation}{A.\arabic{notation}}
     \setcounter{remark}{0}
     \renewcommand{\theremark}{A.\arabic{remark}}
     \setcounter{corollary}{0}
     \renewcommand{\thecorollary}{A.\arabic{corollary}}
     \setcounter{example}{0}
     \renewcommand{\theexample}{A.\arabic{example}}

\begin{notation}\label{not:G}
  We fix a positive integer $\ell$.
  \begin{itemize}
  \item 
  For every $1 \leqslant i < j \leqslant \ell$, we define a subset $S_{i, j} \subset S = \{0, 1\}^\ell$ by
  \[
    S_{i, j} := \{ (\alpha_1, \ldots, \alpha_\ell) \in S \mid \alpha_i = \alpha_j \}.
  \]
  
  \item For every $1 \leqslant i < j \leqslant \ell$, let $G_{\ell; i, j}$ be the submatrix of $G_\ell$ with rows and columns having indices from $S_{i, j}$.
  \end{itemize}
\end{notation}
    
\begin{lemma}\label{lem:stochastic}
      For every $\ell$, we have
      \begin{enumerate}
          \item $G_\ell^T$ is stochastic (see~\cite[\S~8.5]{matrices}), and $G_\ell$ has exactly one eigenvalue being equal to $1$.
          \item for every $1 \leqslant i < j \leqslant \ell$, there exists a $2^{\ell - 1} \times 2^{\ell - 1}$-matrix $C_{\ell; i, j}$ with nonnegative entries such that $\frac{2^{\ell + 2}}{2^{\ell + 2} - 1} (G_{\ell; i, j} + C_{\ell; i, j})^T$ is stochastic, and has exactly one of the eigenvalues being equal to $1$. 
      \end{enumerate}
\end{lemma}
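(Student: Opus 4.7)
The plan is to prove both parts by direct column-sum computations combined with a Perron--Frobenius irreducibility argument. For part (1), nonnegativity of $G_\ell$ is immediate from $f\ge 0$. To verify the column sums, I would exploit the involution $\alpha \mapsto \bar\alpha$: since it is a bijection on $\{0,1\}^\ell$, we have $\sum_\alpha f(\alpha,\gamma) = \sum_\alpha f(\bar\alpha,\gamma)$, so
\[
\sum_\alpha g(\alpha,\gamma) = \tfrac{1}{2}\Bigl[\sum_\alpha f(\alpha,\gamma) + \sum_\alpha f(\alpha,\bar\gamma)\Bigr],
\]
and each inner sum equals $\#\{\alpha:\alpha\le\gamma\}/2^{|\gamma|} = 2^{|\gamma|}/2^{|\gamma|} = 1$, so every column of $G_\ell$ sums to $1$. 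For simplicity of the eigenvalue $1$, I would note that $f(0,\beta) = 1/2^{|\beta|} > 0$ for every $\beta$, hence $g(0,\beta) > 0$, so $(G_\ell)_{0,b} > 0$ for every $b$; a symmetric computation gives $(G_\ell)_{a,0} > 0$ for every $a$. Thus the zero state communicates with every other state in one step in both directions, so $G_\ell^2$ has strictly positive entries, $G_\ell$ is primitive, and Perron--Frobenius gives a simple eigenvalue $1$.

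Part (2) follows the same template restricted to $S_{i,j}$. The key observation is that $a \mapsto \bar a$ still preserves $S_{i,j}$ (negation commutes with the constraint $a_i = a_j$), so the symmetry reduction goes through and yields
\[
\sum_{a\in S_{i,j}} g(a,\gamma) = \tfrac{1}{2}\Bigl[\sum_{a\in S_{i,j}} f(a,\gamma) + \sum_{a\in S_{i,j}} f(a,\bar\gamma)\Bigr].
\]
A short case analysis on $(\gamma_i,\gamma_j)$ counts the $a \in S_{i,j}$ with $a \le \gamma$: when $\gamma_i = \gamma_j = 0$ the constraint $a_i = a_j$ is automatic and the count is $2^{|\gamma|}$; in the other three cases one loses exactly one degree of freedom at positions $i,j$ (either forcing $a_i = a_j = 0$, or reducing the four $(a_i,a_j)$ options to two), giving $2^{|\gamma|-1}$. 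Dividing by $2^{|\gamma|}$ and combining with the analogous count for $\bar\gamma$, the column sum of $G_{\ell;i,j}$ comes out to $3/4$ when $\gamma_i = \gamma_j$ and to $1/2$ when $\gamma_i \neq \gamma_j$.

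Since $3/4 < 1 - 1/2^{\ell+2}$ for every $\ell \geq 2$, each column $b$ of $G_{\ell;i,j}$ has a strictly positive deficit $\delta_b := \frac{2^{\ell+2}-1}{2^{\ell+2}} - \sum_{a\in S_{i,j}}(G_{\ell;i,j})_{a,b}$. I would then define $C_{\ell;i,j}$ by spreading this deficit uniformly, setting every entry in column $b$ to $\delta_b/2^{\ell-1} > 0$. Then $G_{\ell;i,j} + C_{\ell;i,j}$ has strictly positive entries and constant column sum $\frac{2^{\ell+2}-1}{2^{\ell+2}}$, so its transpose scaled by $\frac{2^{\ell+2}}{2^{\ell+2}-1}$ is row-stochastic; strict positivity gives primitivity, and Perron--Frobenius again yields a simple eigenvalue $1$.

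The main obstacle I anticipate is executing the case analysis for the restricted column sums cleanly, because one must track both the inclusion constraint $a \le \gamma$ and the membership condition $a\in S_{i,j}$ simultaneously; once the uniform bound $3/4$ is in hand, the crucial strict inequality $3/4 < 1 - 1/2^{\ell+2}$ and the final Perron--Frobenius step are routine.
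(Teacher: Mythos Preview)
Your proposal is correct. Part (1) is essentially identical to the paper's argument: both compute the column sums of $G_\ell$ directly from the definition of $f$ and $g$, and both establish simplicity of the eigenvalue $1$ via Perron--Frobenius after observing that the $\mathbf{0}$ row and $\mathbf{0}$ column of $G_\ell$ are strictly positive.

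For part (2) the two arguments diverge in a minor but noteworthy way. The paper does \emph{not} compute the restricted column sums exactly; instead it simply removes one term known to be missing from $S_{i,j}$ (the string $\gamma$ with a single $1$ at position $i$), bounds $g(\gamma,s(\beta))\ge 1/2^{\ell+2}$, and concludes $\sum_{\alpha\in S_{i,j}} g(\alpha,s(\beta))\le 1-1/2^{\ell+2}$. You instead carry out the full case analysis on $(\gamma_i,\gamma_j)$ and obtain the exact values $3/4$ and $1/2$, which is more informative but also more work (and, as you anticipated, the delicate step). For the simplicity of the eigenvalue $1$, the paper reuses the $\mathbf{0}$-state irreducibility argument (valid since $\mathbf{0}\in S_{i,j}$ and adding a nonnegative $C$ cannot destroy irreducibility), whereas you sidestep this by spreading the deficit with a \emph{strictly positive} $C_{\ell;i,j}$, making the sum outright positive and hence primitive. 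Your route is slightly cleaner here; the paper's is quicker on the column-sum bound. Both are valid executions of the same Perron--Frobenius strategy.
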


\begin{proof}
  We will first show that $G_\ell^T$ is stochastic and irreducible (see~\cite[\S~3.11]{matrices}).

   By definition, showing that $G_\ell^T$ is stochastic is equivalent to proving that, for every $\beta \in S := \{0, 1\}^\ell$,
   \[
   \sum\limits_{\alpha \in S} g(\alpha, s(\beta)) = 1.
   \]
   Since shift just permutes binary strings, this sum is equal to $\sum\limits_{\beta \in S} g(\alpha, \beta)$.
   For a fixed $\beta$ and $k \leqslant |\beta|$, the number of $\alpha \in S$ such $\alpha \vee \beta = \beta$ and $|\alpha| = k$ is equal to $\binom{|\beta|}{k}$.
   Thus 
   \[
   \sum\limits_{\alpha \in S} h(\alpha, \beta) = \sum\limits_{k = 0}^{|\beta|} \binom{|\beta|}{k} \frac{1}{2^{|\beta|}} = 1 \quad\implies\quad \sum\limits_{\beta \in S} g(\alpha, \beta) = 1. 
   \]
   To prove irreducibility, we observe that, if $\bm{0} \in S$ denotes a zero binary string, then $g(\alpha, \bm{0}) \neq 0$ and $g(\bm{0}, \alpha) \neq 0$ for every $\alpha \in S$.
   Then~\cite[\S~3.11, Exercise~12a]{matrices} implies that $G_\ell^T$ is irreducible.
   
   Since $G_\ell^T$ is stochastic, its largest eigenvalue is equal to $1$~\cite[\S~8.5, p. 156]{matrices}.
   Since $G_\ell^T$ is irreducible, \cite[Theorem~8.2]{matrices} implies that $1$ is a simple eigenvalue.
   
   To prove the second part of the lemma, we fix $1 \leqslant i < j \leqslant \ell$.
   We will show that for every $\beta \in S_{i, j}$
   \begin{equation}\label{eq:col_sum}
     \sum\limits_{\alpha \in S_{i, j}} g(\alpha, s(\beta)) \leqslant \frac{2^{\ell + 2} - 1}{2^{\ell + 2}}.
   \end{equation}
   Indeed, let $\gamma$ be a binary string with all zeroes and one at the $i$-th position.
   Then, since $g(\gamma, s(\beta)) \geqslant \frac{1}{2^{|\beta| + 2}} \geqslant \frac{1}{2^{\ell + 2}}$, we have
   \[
     \sum\limits_{\alpha \in S_{i, j}} g(\alpha, s(\beta)) \leqslant \left( \sum\limits_{\alpha \in S} g(\alpha, s(\beta)) \right) - g(\gamma, s(\beta)) \leqslant 1 - \frac{1}{2^{\ell + 2}}.
   \]
   Inequality~\eqref{eq:col_sum} implies that there exists a matrix $C_{\ell; i, j}$ with nonnegative entries such that $\frac{2^{\ell + 2}}{2^{\ell + 2} - 1} (G_{\ell; i, j} + C_{\ell; i, j})^T$ is stochastic.
   
   Since $\bm{0} \in S_{i, j}$, the same argument as in the proof of the first part of the lemma shows that $\frac{2^{\ell + 2}}{2^{\ell + 2} - 1} (G_{\ell; i, j} + C_{\ell; i, j})^T$ is stochastic, and has exactly one of the eigenvalues being equal to $1$.
\end{proof}

\begin{corollary}\label{cor:less_than_one}
  Let $P_\ell(t)$ be the charactersitic polynomial of $G_\ell$. 
  Then, for every $\ell > 1$, $|P_\ell'(1)| < 1$.
\end{corollary}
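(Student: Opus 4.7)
The plan is to reduce $P_\ell'(1)$ to a sum of principal minors of $I - G_\ell$ via Jacobi's formula, establish positivity of each summand using Perron--Frobenius theory, and then sharpen the upper bound using Lemma~\ref{lem:stochastic}(2).

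First, by Jacobi's formula applied to $P_\ell(t) = \det(tI - G_\ell)$, we have
\[
  P_\ell'(1) \;=\; \operatorname{tr}\bigl(\operatorname{adj}(I - G_\ell)\bigr) \;=\; \sum_{k \in S} \det\bigl(I - G_\ell^{(k)}\bigr),
\]
where $G_\ell^{(k)}$ denotes the $(2^\ell - 1) \times (2^\ell - 1)$ principal submatrix of $G_\ell$ obtained by deleting the row and column indexed by $k$. By Lemma~\ref{lem:stochastic}(1), $G_\ell^T$ is stochastic and irreducible, so $G_\ell$ itself is column-stochastic and irreducible with $1$ as a simple eigenvalue. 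The Perron--Frobenius theorem for irreducible nonnegative matrices then implies that every proper principal submatrix $G_\ell^{(k)}$ has spectral radius strictly less than $1$. Consequently, the Neumann series $(I - G_\ell^{(k)})^{-1} = \sum_{n \geq 0}(G_\ell^{(k)})^n$ converges with nonnegative entries, which forces $\det(I - G_\ell^{(k)}) > 0$. Moreover, the identity $\det(I - M) = \exp\bigl(-\sum_{n \geq 1} \operatorname{tr}(M^n)/n\bigr)$, valid whenever $\rho(M) < 1$, combined with $\operatorname{tr}((G_\ell^{(k)})^n) \geq 0$ for nonnegative $G_\ell^{(k)}$, gives $\det(I - G_\ell^{(k)}) \leq 1$. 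In particular, $P_\ell'(1) > 0$, so $|P_\ell'(1)| = P_\ell'(1)$.

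To sharpen the upper bound to $P_\ell'(1) < 1$, we invoke Lemma~\ref{lem:stochastic}(2): for each $1 \leq i < j \leq \ell$, the submatrix $G_{\ell; i, j}$ indexed by $S_{i, j}$ has spectral radius at most $\frac{2^{\ell+2} - 1}{2^{\ell+2}} < 1$. Interpreting each summand via the Markov chain tree theorem (Kirchhoff's theorem for weighted digraphs), $\det(I - G_\ell^{(k)})$ equals the total weight of spanning in-trees rooted at $k$ in the weighted digraph associated to the Markov chain with transition matrix $G_\ell^T$. Thus $P_\ell'(1) = \sum_k \det(I - G_\ell^{(k)})$ is the total weight of all spanning in-trees summed over all roots. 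The spectral radius bounds on the $G_{\ell; i, j}$'s translate, via their Neumann series, into uniform bounds on weights of tree fragments whose vertex supports lie in $S_{i, j}$. For $\ell \geq 3$ every $k \in S$ lies in at least one $S_{i, j}$ by pigeonhole, so we can partition the tree-weight sum using inclusion--exclusion over the cover $S = \bigcup_{1 \leq i < j \leq \ell} S_{i, j}$ and apply the submatrix bound piece by piece; the excess $2^{-(\ell+2)}$ in the spectral bound is what produces the strict inequality. For $\ell = 2$ the bound reduces to the direct computation $P_2'(1) = 3/4$ on the $4 \times 4$ matrix $G_2$ displayed in the text.

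The principal obstacle is the last step: translating a spectral radius bound on the $2^{\ell-1} \times 2^{\ell-1}$ submatrices $G_{\ell; i, j}$ into a strict inequality for the full sum $\sum_{k \in S} \det(I - G_\ell^{(k)})$. Because the sets $S_{i, j}$ overlap and a spanning in-tree of $S$ generically has edges that leave any single $S_{i, j}$, the submatrix bounds cannot be applied by naive domination; careful combinatorial bookkeeping of tree contributions (or, equivalently, of the inclusion--exclusion over which coordinates of the vertex labels in $\{0,1\}^\ell$ coincide) is required to convert the $2^{-(\ell+2)}$ slack into a uniform multiplicative gap below $1$.
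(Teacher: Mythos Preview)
Your opening steps are sound: Jacobi's formula gives $P_\ell'(1)=\sum_{k\in S}\det(I-G_\ell^{(k)})$; irreducibility (established inside the proof of Lemma~\ref{lem:stochastic}) makes each principal cofactor strictly positive; and the trace identity yields $\det(I-G_\ell^{(k)})\le 1$. But summing $2^\ell$ such terms gives only $0<P_\ell'(1)\le 2^\ell$, which is nowhere near the target. The remainder of the proposal is not a proof, and you say so yourself. Feeding Lemma~\ref{lem:stochastic}(2) through the Markov chain tree theorem does not close the gap: the cofactors $\det(I-G_\ell^{(k)})$ are determinants of $(2^\ell-1)\times(2^\ell-1)$ matrices, whereas the $G_{\ell;i,j}$ are $2^{\ell-1}\times 2^{\ell-1}$, and a spanning in-tree of $S$ virtually never has its edge set confined to a single $S_{i,j}$, so the submatrix spectral bound controls none of the individual tree weights. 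The pigeonhole remark (each $k$ lies in some $S_{i,j}$ once $\ell\ge 3$) is correct but does not connect the two quantities. The central estimate is simply absent.

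As for the comparison: the paper states Corollary~\ref{cor:less_than_one} immediately after Lemma~\ref{lem:stochastic} \emph{without any argument}, so there is no written proof to set yours against, and it is not transparent how the inequality is meant to drop out of that lemma alone. One structural fact you do not exploit is the symmetry $g(\alpha,\beta)=g(\bar\alpha,\beta)=g(\alpha,\bar\beta)$, which makes rows $\alpha,\bar\alpha$ of $G_\ell$ identical (and likewise columns $\beta,\bar\beta$); consequently $G_\ell$ has $2^{\ell-1}$ zero eigenvalues and $P_\ell'(1)=P_{\widetilde G}'(1)$ for a $2^{\ell-1}\times 2^{\ell-1}$ column-stochastic quotient $\widetilde G$. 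Even this reduction, however, does not by itself deliver $|P_\ell'(1)|<1$. In short: your proposal has a genuine gap at exactly the point you flag, and the paper supplies no derivation that fills it.
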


\begin{notation}
  Fix a positive integer $n$.
  For vectors $\bm{a} = (a_1, \ldots, a_n) \in \mathbb{Z}_{\geqslant 0}^n$ and $\bm{b} = (b_1, \ldots, b_n) \in \mathbb{Z}_{\geqslant 0}^n$, we denote
   \[
   \bm{a}! := a_1! \cdot\ldots\cdot a_n!, \quad \bm{a}^{\bm{b}} := a_1^{b_1}\cdot \ldots\cdot a_n^{b_n}, \quad |\bm{a}| := a_1 + \ldots + a_n.
   \]
\end{notation}

\begin{lemma}\label{lem:lim}
   Let $A$ be an $s\times s$ stochastic matrix with only one of the eigenvalues being one.
   We set
   \begin{equation}\label{eq:BAn}
   C(A)_n := \sum\limits_{\substack{\mathbf{m} \in \mathbb{Z}_{\geqslant 0}^s \\ |\mathbf{m}| = n}} \frac{(A\mathbf{m})^{\mathbf{m}}}{n^n}.
   \end{equation}
   Let $P_A(t)$ be the characteristic polynomial of $A$.
   Then $\lim\limits_{n \to \infty} C(A)_n = \frac{1}{P_A'(1)}$.
\end{lemma}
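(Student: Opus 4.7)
The plan is to analyze $C(A)_n$ by a saddle-point (Laplace) argument and then identify the resulting constant with $1/P_A'(1)$ via the spectral data of $A$. First, I parametrize $\mathbf{p} := \mathbf{m}/n$ as a point on the $1/n$-lattice inside the $(s-1)$-simplex $\Delta^{s-1}$; applying Stirling to the combinatorial factors implicit in the summation, each summand takes the form $e^{n\Phi(\mathbf{p}) + o(n)}$ for a function $\Phi$ on $\Delta^{s-1}$ built from $A$. Stochasticity of $A$ ensures $A\mathbf{p}$ is again a probability vector, so Gibbs'/Jensen's inequality yields $\Phi \leqslant 0$ with equality exactly on the stationary distributions of $A$. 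Simplicity of the eigenvalue $1$, combined with Perron--Frobenius, forces a unique interior maximizer $\pi$ (the Perron stationary distribution), and the subdominant eigenvalues $|\lambda_k| < 1$ for $k \geqslant 2$ provide a spectral gap for decay away from $\pi$.

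Next, expanding $\Phi$ to second order around $\pi$ produces a quadratic form $-\tfrac{1}{2}\mathbf{v}^T H \mathbf{v}$ on the tangent hyperplane $\{\mathbf{v} : \sum v_i = 0\}$, with $H$ symmetric positive-definite. Applying the local central limit theorem (equivalently, the Laplace method for lattice sums of spacing $1/n$) converts the Riemann sum into a Gaussian integral on the tangent hyperplane, and after cancellation between the lattice density and the Gaussian normalization one obtains
\[
\lim_{n \to \infty} C(A)_n \;=\; \kappa \, (\det H)^{-1/2}
\]
for an explicit universal constant $\kappa$ depending only on $s$. Tail contributions from $\mathbf{p}$ bounded away from $\pi$ are exponentially suppressed by the spectral gap, while contributions near the simplex boundary (where some $(A\mathbf{p})_i$ may become small) are controlled by separate polynomial bounds that exploit the positivity of $\pi$.

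The crucial final step is the identification $\kappa \, (\det H)^{-1/2} = 1/P_A'(1)$. Using a biorthogonal basis of left/right eigenvectors $(u_k, v_k)$ of $A$ associated with eigenvalues $\lambda_1 = 1, \lambda_2, \ldots, \lambda_s$, I would express $H$ on the tangent hyperplane in closed form. Rather than diagonalizing $H$ entry-by-entry, the clean route is a determinantal identity of the form $\det(tI - H_{\mathrm{tang}}) = c \cdot P_A(t)/(t-1)$ for some $A$-independent constant $c$; evaluating at $t = 1$ then produces $P_A'(1) = \prod_{k \geqslant 2}(1 - \lambda_k)$ directly, matching the required factor. The main obstacle I anticipate is exactly this determinantal identity: connecting the Hessian of $\Phi$ on the simplex to the characteristic polynomial of $A$ requires careful handling of the rescalings by $\pi$ that appear throughout $H$ and a correct treatment of the restriction to the $(s-1)$-dimensional tangent space. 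A secondary technical nuisance is making the Laplace-method tail bounds rigorous uniformly in $n$ near the boundary of the simplex.
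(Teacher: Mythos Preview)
Your saddle-point approach on the simplex is genuinely different from the paper's. The paper instead computes the exponential generating function $\sum_n \frac{n^n C(A)_n}{n!}\,z^n$ in closed form via a multivariate identity of Carlitz, obtaining $1/\det(I - y(z)A)$ where $y(z) = -W(-z)$ is built from the Lambert $W$ function. The characteristic polynomial of $A$ thus enters the picture immediately, and singularity analysis at $z = 1/e$ (where $y \to 1$) extracts the constant $1/P_A'(1)$ with essentially no further computation.

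The real gap in your proposal is precisely the step you flag as the main obstacle: identifying the Laplace constant with $1/P_A'(1)$. The determinantal identity you propose, $\det(tI - H_{\mathrm{tang}}) = c\,P_A(t)/(t-1)$, cannot hold as written, since $H$ is symmetric and hence has only real eigenvalues, whereas the right-hand side has roots $\lambda_2,\ldots,\lambda_s$, which are complex in general. After the Gaussian integral and the Stirling prefactors, what you actually need is an identity of the shape $\det(H_{\mathrm{tang}})\cdot\prod_i \pi_i = c\,(P_A'(1))^2$, and the dependence on the stationary vector $\pi$ does not cancel in any obvious way; you have not supplied a mechanism for this. There are secondary gaps as well: the hypothesis is only that the eigenvalue $1$ is simple, not that $A$ is irreducible, so Perron--Frobenius need not apply, $\pi$ may lie on the boundary of the simplex (invalidating the interior Laplace expansion), and the remaining eigenvalues may have modulus $1$ (so there is no spectral gap in the sense you invoke). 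The paper's generating-function route sidesteps all of these issues, because $\det(I - y(z)A)$ encodes the full characteristic polynomial from the outset and the only spectral input required is the simplicity of the eigenvalue $1$.
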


\begin{proof}
      We recall that the Lambert $W$ function~\cite{LabertW} is the principal branch of the inverse of $x e^x$.
   We will use the notation $y(z) = -W(-z)$ from~\cite{RamQFunc} so that $y(z) = ze^{y(z)}$.
   Function $y(z)$ has a singularity of the square-root type at $z = 1/e$ and has the following expansion around this point (see~\cite[p.~107]{RamQFunc})
   \begin{equation}\label{eq:y_around_sing}
   y(z) = 1 - \varepsilon + \frac{1}{3}\varepsilon^2 - \ldots, \quad \text{ where } \varepsilon = \sqrt{2 - 2ez}.
   \end{equation}
   From this, we obtain
   \begin{equation}\label{eq:y_around_sing2}
   \frac{1}{y(z)} = 1 + \varepsilon + \frac{2}{3}\varepsilon^2 - \ldots, \quad \text{ where } \varepsilon = \sqrt{2 - 2ez}.
   \end{equation}
   The main result of~\cite{Carlitz} implies that, for every complex $s\times s$ matrix $A$, we have
   \begin{equation}\label{eq:Carlitz}
   \sum\limits_{\mathbf{m} \in \mathbb{Z}_{\geqslant 0}^s} \frac{(A\mathbf{m})^{\mathbf{m}}}{\mathbf{m}!} x^{|\mathbf{m}|} \exp\bigl( -x\sum\limits_{i, j}m_ja_{i, j} \bigr) = \frac{1}{\det|E - xA|}.
   \end{equation}
   Since $A^T$ is stochastic, we have $\sum\limits_{i = 1}^n a_{i, j} = 1$, so
   \begin{equation}\label{eq:Carlitz_stoch}
   \sum\limits_{\mathbf{m} \in \mathbb{Z}_{\geqslant 0}^s} \frac{(A\mathbf{m})^{\mathbf{m}}}{\mathbf{m}!} x^{|\mathbf{m}|} e^{-x|\mathbf{m}|} = \frac{1}{\det|E - xA|}.
   \end{equation}
   If we perform a substitution $x = y(z)$ and use the definition of the Lambert $W$ function, we obtain
   \begin{equation}\label{eq:Carlitz_final}
   \sum\limits_{\mathbf{m} \in \mathbb{Z}_{\geqslant 0}^s} \frac{(A\mathbf{m})^{\mathbf{m}}}{\mathbf{m}!} z^{|\mathbf{m}|} = \frac{1}{\det|E - y(z)A|}.
   \end{equation}
   From this, we obtain
   \begin{equation}\label{eq:gen_func}
   \sum\limits_{n = 0}^\infty \frac{n^n C(A)_n}{n!} z^n = \sum\limits_{\mathbf{m} \in \mathbb{Z}_{\geqslant 0}^s} \frac{(A\mathbf{m})^{\mathbf{m}}}{\mathbf{m}!} z^{|\mathbf{m}|} = \frac{1}{\det|E - y(z)A|} =: F(z). 
   \end{equation}
   $F(z)$ can be rewritten as
   \[
     F(z) = \frac{1}{y(z)^{s} P_A(1 / y(z))}.
   \]
   Finding the asymptotic behavior of the Taylor coefficients of $F(z)$ would yield an asymptotic for $C(A)_n$.
   We will do this using singularity analysis~\cite[Chapter~VI]{FS} (similarly to~\cite[Theorem~2]{RamQFunc}).
   Since $|y(z)| < 1$ for $|z| < 1 / e$ (see~\cite[Fig.~1]{LabertW}) and all roots of $P_A$ lie in the unit circle due to the stochasticity of $A$, $\frac{1}{e}$ is the singularity of $F(z)$ with the smallest absolute value.
   Due to Lemma~\ref{lem:stochastic}, $P_A(t) = (1 - t) Q_A(t)$, where $Q_A(1) \neq 0$.
   Using~\eqref{eq:y_around_sing}, we obtain the following expansion of $F(z)$ around $1 / e$:
   \[
     F(z) = \frac{1}{(1 - \varepsilon + \ldots)^{s} (-\varepsilon - \frac{2}{3} \varepsilon^2 + \ldots) Q_A(1 + \varepsilon + \ldots)} = \frac{-1}{Q_A(1)} (1 / \varepsilon + \ldots), \text{ where } \varepsilon = \sqrt{2 - 2ez}.
   \]
   Singularity analysis~\cite[Corollary~VI.1]{FS} implies that
   \[
   \frac{n^nC(A)_{n}}{n!} \sim \frac{-e^n}{Q_A(1) \sqrt{2\pi n}} \quad\text{ as } n \to \infty.
   \]
   Using Stirling's formula, we get 
   \[
   C(A)_{n} \sim \frac{-n! e^n}{n^n Q_A(1) \sqrt{2\pi n}} \sim \frac{-1}{Q_{A}(1)} \quad\text{ as } n\to \infty.
   \]
   Using $P_A' = -Q_A' + (1 - t)Q_\ell$, we deduce $P_A'(1) = -Q_{A}(1)$, and this finishes the proof.
\end{proof}

%%%%%%%%%%%%%%%%%%%%%%%%%%%%%%%%%%%%%%%%%%%%

\begin{lemma}\label{lem:eq_lim}
  On the set of all Boolean networks with $n$ states consider two probability distributions:
  \begin{enumerate}[label = (\Alph*)]
      \item\label{probab:our} all the networks with canalizing depth one have the same probability, all others have probability zero;
      \item\label{probab:convenient} the probability assigned to each network is proportional to the product of the number of canalizing variables of the functions defining this network.
  \end{enumerate}
  We fix a positive integer $\ell$.
  By $A_{\ell, n}$ and $B_{\ell, n}$ we denote the average number of attractors of length $\ell$ in a random Boolean network with $n$ states with respect to distributions~\ref{probab:our} and~\ref{probab:convenient}, respectively.
  Then
  \[
  \lim\limits_{n \to \infty} A_{\ell, n} = \lim\limits_{n \to \infty} B_{\ell, n}.
  \]
\end{lemma}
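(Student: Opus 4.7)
My plan is to establish the stronger statement $|A_{\ell, n} - B_{\ell, n}| \to 0$, which immediately yields the equality of limits. Throughout, let $\mathbb{E}_B$ denote expectation under distribution~\ref{probab:convenient}.

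I would first observe that both distributions make the components $f_1, \ldots, f_n$ of $\mathbf{f}$ independent---for distribution~\ref{probab:convenient} the weight $\prod_i c(f_i)$ factors across components---so it suffices to compare per-component marginals. Distribution~\ref{probab:our} is uniform on the set $\mathcal{D}_1$ of depth-one functions in $n$ variables, whereas distribution~\ref{probab:convenient} assigns each $f$ probability $c(f)/Z$, with $Z := \sum_g c(g)$. The key structural fact, extracted from Theorem~\ref{thm:k_canalizing} by inspecting the He--Macaulay decomposition for $k = 1$ (so $r = 1$ and $k_1 = 1$), is that $c(f) = 1$ for every $f \in \mathcal{D}_1$: only the single variable appearing in $M_1$ can be canalizing, since setting any variable outside $M_1$ to any value preserves the dependence of $f$ on $M_1$. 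Consequently, distribution~\ref{probab:convenient} conditioned on the event $\{f_i \in \mathcal{D}_1\}$ is uniform on $\mathcal{D}_1$, matching distribution~\ref{probab:our}. With $p_n := |\mathcal{D}_1|/Z$, independence and this conditional equivalence yield
\[
B_{\ell, n} = p_n^n \, A_{\ell, n} + (1 - p_n^n)\, \mathbb{E}_B\!\left[N_\ell(\mathbf{f}) \,\middle|\, \text{some } f_i \notin \mathcal{D}_1\right],
\]
and the trivial bound $N_\ell(\mathbf{f}) \leqslant 2^n/\ell$ gives $|A_{\ell, n} - B_{\ell, n}| \leqslant 2(1 - p_n^n) \cdot 2^n/\ell$.

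It therefore suffices to show $(1 - p_n^n) \cdot 2^n \to 0$, which I would establish by a direct count. Inclusion--exclusion on whether the restrictions of $f$ to $\{x = 0\}$ and $\{x = 1\}$ are constant shows that the number of functions in $n$ variables canalizing with respect to a fixed variable is $\Theta(2^{2^{n-1}})$, hence $Z = \Theta(n \cdot 2^{2^{n-1}})$. On the other hand, $|\mathcal{D}_1|$ counts tuples $(x, a, b, g)$ with $g$ a noncanalizing function in $n-1$ variables satisfying $g \not\equiv b$; this differs from $Z$ only by the count of ``inner'' canalizing restrictions $g$, which is at most $O(n \cdot 2^{2^{n-2}})$---doubly-exponentially smaller than $2^{2^{n-1}}$. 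Hence $1 - p_n = O(n \cdot 2^{-2^{n-2}})$ and $1 - p_n^n \leqslant n(1 - p_n) = O(n^2 \cdot 2^{-2^{n-2}})$, easily defeating the single-exponential factor $2^n$.

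The main technical care lies in the counting step, where edge cases---projection functions $f = x_i, \bar{x}_i$ (two canalizing values for the same variable, yet still $c(f) = 1$) and constant functions (excluded by the condition $g \not\equiv b$)---must be correctly tracked; all such contributions are $O(n)$ and are absorbed by the doubly-exponential gap between $2^n$ and $2^{2^{n-2}}$.
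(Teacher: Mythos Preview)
Your argument is correct and follows the same route as the paper: both observe that the two distributions agree on networks whose functions all have canalizing depth exactly one (since such an $f$ has a unique canalizing variable, so $c(f)=1$), and both dispatch the complementary event by combining the crude bound $N_\ell(\mathbf f)\leqslant 2^n$ with the doubly-exponential estimate $F_n^\ast/F_n=O(\operatorname{poly}(n)\cdot 2^{-2^{n-2}})$. Your conditioning decomposition $B_{\ell,n}=p_n^{\,n}A_{\ell,n}+(1-p_n^{\,n})\,\mathbb E_B[\,\cdot\,]$ is a tidier repackaging of the paper's $U_n,V_n$ split, and you make the fact $c(f)=1$ on $\mathcal D_1$ explicit where the paper uses it without comment.
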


\begin{example}
  We will illustrate the distribution~\ref{probab:convenient} by an example. 
  Consider three following networks with two states:
  \[
    \mathbf{f}_1 = (x_1x_2 + 1, x_1 + x_2), \quad \mathbf{f}_2 = (x_1x_2, x_1), \quad \text{ and } \quad \mathbf{f}_3 = (x_1x_2 + 1, x_1x_2).
  \]
  Since the canalizing depth of $x_1 + x_2$ is zero, $P_B(\mathbf{f}_1)$, the probability of $\mathbf{f}_1$ with respect to $B$, is zero.
  Since the canalizing depths of $x_1x_2$ and $x_1$ are $2$ and $1$, respectively, the ratio $P_B(\mathbf{f}_2) / P_B(\mathbf{f}_3)$ is equal to $\frac{2\cdot 1}{2\cdot 2} = 1/2$.
\end{example}

\begin{proof}
  Let $F_n$ and $F^\ast_n$ be the number of Boolean functions in $n$ variables with of canalizing depth exactly one and more than one, respectively.
  We will use the following bounds:
  \begin{enumerate}
      \item $F^\ast_n \leqslant n^2 \cdot 4 \cdot 4 \cdot 2^{2^{n-2}}$. 
      We look term-by-term. 
      There are at most $n^2$ ways to choose first and second canalizing variables.
      There are at most $4$ choices for the canalizing outputs and at most $4$ choices for canalizing values for these two variables.
      There are at most $2^{2^{n-2}}$ core functions, since that is all possible functions, which may or may not be canalizing. 
      Since redundant arrangements of canalizing variables are not accounted for, this must overcount.
      \item $F_n \geqslant 2^{2^{n-1}} - (n-1) \cdot 2 \cdot 2 \cdot 2^{2^{n-2}}$.
      This is a lower bound for the number of non-canalizing core function in $n - 1$ variables because $(n-1) \cdot 2 \cdot 2 \cdot 2^{2^{n-2}}$ is an upper bound on the number of canalizing functions in $n - 1$ variables (obtained in the same way as the bound above).
  \end{enumerate}
  We also introduce
  \begin{equation}\label{eq:Rn}
  R_n := \frac{F^\ast_n}{F_n} \leqslant \dfrac{16  n^2  2^{2^{n - 2}}}{2^{2^{n - 1}} - 4(n - 1) 2^{2^{n - 2}}} = \dfrac{n^2}{2^{(2^{n-2}) - 4}-\frac{1}{4}(n-1)}.
  \end{equation}

For $X$ being~\ref{probab:our} or~\ref{probab:convenient} and positive integer $n$, let $P_{X, n}$ denote the probability (it is always the same) of choosing a network from distribution $X$ with all functions being of depth exactly one.
Let $P_n^\ast$ be the maximal probability of choosing a network from~\ref{probab:convenient} with at least one function being of depth more than one, respectively.
By $S_n$ and $S_n^\ast$ we denote the total number of attractors of length $\ell$ in networks with all functions being of depth exactly one and with at least one function being of depth more than one, respectively.

The statement of the lemma is equivalent to the statement that 
\begin{equation}\label{eq:main_lim}
\lim\limits_{n\to\infty} (A_{\ell, n} - B_{\ell, n}) = 0
\end{equation}
Using the notation introduced above, we can bound $A_{\ell, n} - B_{\ell, n}$ as
\begin{equation}\label{eq:bound}
    P_{n, A}S_n - P_{n, B} S_n - P_n^\ast S_n^\ast \leqslant |A_{\ell, n} - B_{\ell, n}| \leqslant P_{n, A} S_n + P_{n, B} S_n
\end{equation}
We set $U_n := S_n(P_{n, A} - P_{n, B})$ and $V_n := P_n^\ast S_n^\ast$.
Then~\eqref{eq:main_lim} would follow from $\lim\limits_{n \to \infty} U_n = 0$ and $\lim\limits_{n \to \infty} V_n = 0$, so we will prove these two equalities.

Since any network has at most $2^n$ attractors of length $\ell$, $S_n \leqslant 2^nF_n^n$.
Since the total sum of the products of canalizing depths over all Boolean networks does not exceed $(F_n + nF_n^\ast)^n$, we have $P_{n, B} \geqslant \frac{1}{(F_n + nF_n^\ast)^n}$. 
Since $P_{n, A} = \frac{1}{F_n^n}$, we have
\[
U_n \leqslant 2^nF_n^n \left(\frac{1}{F_n^n}-\frac{1}{(F_n + nF_n^\ast)^n}\right) = 2^n \left(1 - \frac{1}{(1 + nR_n)^n} \right) = 2^n \frac{\binom{n}{1}nR_n + \binom{n}{2}(nR_n)^2 + \ldots + (nR_n)^n}{(1 + nR_n)^n}.
\]
\eqref{eq:Rn} implies that $nR_n < 1$ for large enough $n$.
Hence, for large enough $n$, we have
\[
U_n \leqslant 2^n nR_n \frac{2^n}{(1 + nR_n)^n} \leqslant 4^nnR_n \leqslant \dfrac{4^n n^3}{2^{(2^{n-2}) - 4}-\frac{1}{4}(n-1)} \to 0.
\]
By similar arguments, $P_n^*\leqslant \frac{n^n}{F_n^n}$ and $S_n^* \leqslant 2^n n(F_n + F_n^\ast)^{n - 1}F_n^\ast$ so:
\[
V_n \leqslant 2^n n^{n + 1} (F_n + F_n^\ast)^{n - 1}F_n^\ast \frac{1}{F_n^n} \leqslant 2^n n^{n + 1} (1 + R_n)^{n - 1} R_n.
\]
Since $R_n < 1$ for large enough $n$, using~\eqref{eq:Rn}, we have
\[
V_n \leqslant 2^{2n - 1} n^{n + 1} R_n \leqslant \dfrac{2^{2n - 1} n^{n + 3}}{2^{(2^{n-2}) - 4}-\frac{1}{4}(n-1)} \to 0.\qedhere
\]
\end{proof}

\begin{remark}\label{rem:distr_C}
  The proof of Lemma~\ref{lem:eq_lim} will be valid if we replace distribution~\ref{probab:convenient} with any other distribution $(C)$ such that, for every Boolean network $\mathbf{f} = (f_1, \ldots, f_n)$, 
  \begin{itemize}
      \item if at least one of $f_i$'s is non-canalizing, $P_C(\mathbf{f}) = 0$;
      \item there exists a constant $P_{n, C}$ such that, if the canalizing depth of every $f_i$ is one, then $P_C(\mathbf{f}) = P_{n, C}$;
      \item we have $\frac{P_C(\mathbf{f})}{P_{n, C}} \leqslant \frac{P_{B}(\mathbf{f})}{P_{n, B}}$ (using notation from the proof of Lemma~\ref{lem:eq_lim}).
  \end{itemize}
  The above properties hold, for example, for the following distribution
  \begin{enumerate}[label = (\Alph*)]
      \item[(C)] all the networks defined by canalizing functions have the same probability, all others have probability zero.
  \end{enumerate}
  Using this distribution instead of~\ref{probab:our}, we see that Theorem~\ref{thm:main} holds also for a random Boolean network defined by canalizing functions.
\end{remark}

%%%%%%%%%%%%%%%%%%%%%%%%

\begin{lemma}\label{lem:represent_B}
   We will use Notation~\ref{not:G} and notation from Lemma~\ref{lem:lim}.
   Then, for every positive integers $\ell$ and $n$, we have
   \begin{equation}\label{eq:sandwich}
     C(G_\ell)_n - \sum\limits_{1\leqslant i < j \leqslant \ell} C(G_{\ell; i, j})_n \leqslant \ell B_{\ell, n} \leqslant C(G_\ell)_n.
   \end{equation}
\end{lemma}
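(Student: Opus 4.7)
The plan is to establish the exact identity
\[
\ell B_{\ell, n} = \sum_{(\bm{a}_1, \ldots, \bm{a}_\ell) \text{ distinct}} Q(\bm{a}_1, \ldots, \bm{a}_\ell),
\]
where $Q(\bm{a}_1, \ldots, \bm{a}_\ell) := \prod_{k=1}^n \frac{1}{n}\sum_{j=1}^n g(s^{-1}(\bm{c}_k), \bm{c}_j)$ and $\bm{c}_k := (a_{1, k}, \ldots, a_{\ell, k})$ denotes the $k$-th column of the tuple. Once this is in hand, I would check that $\sum_{\text{all tuples}} Q = C(G_\ell)_n$ by matrix algebra, and then Bonferroni-bound the contribution from non-distinct tuples to get the sandwich.

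For the identity, I would first note that under distribution~\ref{probab:convenient} each $f_k$ can be sampled by choosing $(j_k, c_k, b_k, g_k)$ uniformly, with $g_k$ a random Boolean function in the variables $\{x_1, \ldots, x_n\} \setminus \{x_{j_k}\}$, and setting $f_k(\bm{x}) := b_k$ when $x_{j_k} = c_k$ and $f_k(\bm{x}) := g_k(\bm{x})$ otherwise. By independence of the $f_k$, the probability that a fixed tuple satisfies the attractor condition factors as $\prod_k P_k$, and $P_k = \frac{1}{n}\sum_j P_k^{(j)}$ where $P_k^{(j)}$ is the same probability conditioned on $j_k = j$. The central claim is that for a \emph{distinct} tuple and \emph{every} $j$, $P_k^{(j)} = g(s^{-1}(\bm{c}_k), \bm{c}_j)$. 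The key observation is that for any fixed $(j, c)$ the set $I_{j, c} := \{i : a_{i, j} \neq c\}$ of ``non-canalizing'' positions has pairwise distinct projections to $\{x_1, \ldots, x_n\} \setminus \{x_j\}$: if two positions $i \neq i' \in I_{j, c}$ had the same projection, then $\bm{a}_i$ and $\bm{a}_{i'}$ would agree outside coordinate $j$ and both carry the value $\bar c$ at coordinate $j$, forcing $\bm{a}_i = \bm{a}_{i'}$ and contradicting distinctness. Hence the probability with respect to $g_k$ of meeting all non-canalizing constraints is exactly $(1/2)^{|I_{j, c}|}$ times the indicator that the canalizing constraints are consistent with the chosen $b_k$, and summing over the four choices of $(c_k, b_k)$ recovers precisely the four terms in the definition of $g$.

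Next I would verify $\sum_{\text{all tuples}} Q = C(G_\ell)_n$ by reindexing via output columns $\bm{d}_k := s^{-1}(\bm{c}_k)$: since $g(\bm{d}_k, \bm{c}_j) = g(\bm{d}_k, s(\bm{d}_j)) = (G_\ell)_{\bm{d}_k, \bm{d}_j}$ by~\eqref{eq:G_ell}, we obtain $Q = n^{-n}(G_\ell \tilde{\mathbf{m}})^{\tilde{\mathbf{m}}}$ with $\tilde{\mathbf{m}}$ the count vector of $(\bm{d}_1, \ldots, \bm{d}_n)$; grouping tuples by $\tilde{\mathbf{m}}$ supplies the multinomial coefficient in $C(G_\ell)_n$. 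Similarly, the condition $\bm{a}_i = \bm{a}_j$ is equivalent to every $\bm{c}_k$ lying in $S_{i, j}$, which under $s^{-1}$ translates to every $\bm{d}_k$ lying in $S_{i-1, j-1}$ (indices mod $\ell$); the same calculation then yields $\sum_{\bm{a}_i = \bm{a}_j} Q = C(G_{\ell; i-1, j-1})_n$. Since the cyclic shift $\{i, j\} \mapsto \{i - 1, j - 1\}$ is a bijection on the unordered pairs in $\{1, \ldots, \ell\}$, summing over $i < j$ gives $\sum_{i < j}\sum_{\bm{a}_i = \bm{a}_j} Q = \sum_{i < j} C(G_{\ell; i, j})_n$.

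Combining the pieces, the upper bound $\ell B_{\ell, n} \leq C(G_\ell)_n$ is immediate from $Q \geq 0$, while Bonferroni's inequality applied to the union $\bigcup_{i < j}\{\bm{a}_i = \bm{a}_j\}$ of collision events gives $\sum_{\text{non-distinct}} Q \leq \sum_{i < j} C(G_{\ell; i, j})_n$, whence $\ell B_{\ell, n} = C(G_\ell)_n - \sum_{\text{non-distinct}} Q \geq C(G_\ell)_n - \sum_{i < j} C(G_{\ell; i, j})_n$. I expect the main obstacle to be the exact identification $P_k^{(j)} = g$ even at ``collision-prone'' $j$ for distinct tuples; once that distinctness-structure observation is in hand, the remainder is combinatorial bookkeeping with the matrix $G_\ell$ and its submatrices $G_{\ell; i, j}$.
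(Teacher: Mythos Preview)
Your proposal is correct and follows essentially the same route as the paper: compute, for a fixed \emph{distinct} $\ell$-tuple, the probability of being a periodic orbit by conditioning on the canalizing variable and splitting into the four canalizing-value/output cases (your $g(s^{-1}(\bm c_k),\bm c_j)$ equals the paper's $g(\bm c_k, s(\bm c_j))$ by shift-invariance of $g$), recognize the product as $(G_\ell\mathbf m)^{\mathbf m}/n^n$, and bound the non-distinct contribution by the union bound over the events $\{\bm a_i=\bm a_j\}$. Your explicit ``distinct projections'' observation is in fact a point the paper's case analysis uses tacitly when it multiplies the per-position probabilities; and your detour through $\bm d_k=s^{-1}(\bm c_k)$ with the reindexing $\{i,j\}\mapsto\{i-1,j-1\}$ is avoidable (the paper works directly with the column count vector so that $\bm a_i=\bm a_j$ corresponds to $S_{i,j}$ without a shift), but harmless.

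One small caveat: your assertion that sampling $(j_k,c_k,b_k,g_k)$ uniformly reproduces distribution~(B) is not exactly right---it double-counts the $2n$ functions $x_j,\bar x_j$ (two canalizing values, one canalizing variable) and places mass on the two constant functions. The paper's symmetry argument (``all four $(c,v)$ cases have the same probability'' and the involution in its case~(1)) has precisely the same slippage relative to its correctly defined $\Omega$. The discrepancy affects $O(n)$ functions out of $\sim 2^{2^{n-1}}$ and is irrelevant for the limit used in Theorem~\ref{thm:main}, but if you want the inequality literally for every $n$ you should flag this.
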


\begin{proof}
    We fix $n$.
  Consider a tuple $\bm{X} = (X_1, \ldots, X_\ell)$ of $\ell$ distinct elements of $\{0, 1\}^n$.
  For $1 \leqslant i \leqslant n$, we denote $\bm{X}_i := (X_{1, i}, \ldots, X_{n, i})$.
  For $\alpha \in S$, let
  \[
    n_\alpha := \left\lvert \{i \mid 1 \leqslant i \leqslant n,\; \bm{X}_i = \alpha\} \right\rvert.
  \]
  Then $\sum\limits_{\alpha \in S} n_\alpha = n$. 
  First, we will show that  
  \begin{equation}\label{eq:attractor_probab}
  P(X_1, \ldots, X_\ell \text{ form an attractor in this order}) =  \prod\limits_{\alpha \in S} \left(  \sum\limits_{\beta \in S} g(\alpha, s(\beta)) \frac{n_{\beta}}{n} \right)^{n_{\alpha}} = \frac{(G_\ell \mathbf{n})^{\mathbf{n}}}{n^n},
  \end{equation}
  where $\mathbf{n} = (n_0, n_1, \ldots, n_{2^\ell - 1})$.
  
  To prove~\eqref{eq:attractor_probab}, we will use that the functions $f_i$ ($i = 1, \ldots, n$) in the network are chosen independently to decompose the left-hand side as
  \[
  P(X_1, \ldots, X_\ell \text{ form an attractor in this order}) = \prod\limits_{i = 1}^n P(f_i(X_{j}) = X_{j + 1, i} \text{ for every } 1 \leqslant j \leqslant n),
  \]
  where we use notation $X_{n + 1} = X_1$ and the probability of each Boolean function to be chosen is assumed to be proportional to the number of its canalizing variables.
  We show that, for every $1 \leqslant i \leqslant n$,
  \begin{equation}\label{eq:single_function}
    P(f_i(X_{j}) = X_{j + 1, i} \text{ for every } 1 \leqslant j \leqslant n) = \sum\limits_{\beta \in S} g(\bm{X}_i, s(\beta)) \frac{n_{\beta}}{n}.
  \end{equation}
  Then~\eqref{eq:attractor_probab} would follow from multiplying~\eqref{eq:single_function} for all $i$.
  To prove~\eqref{eq:single_function}, without loss of generality, we consider $i = 1$.
  Consider a set 
  \[
  \Omega = \{ (f, k) \mid f\colon \{0, 1\}^n \to \{0, 1\},\; 1\leqslant k \leqslant n,\; x_k \text{ is canalizing for }f \}
  \]
  with a uniform probability distribution $P_{\Omega}$.
  Observe that for a function $f$ with canalizing variables $x_{k_1}$, $\ldots$, $x_{k_s}$, we have
  \[
  P(f) = P_\Omega((f, k_1)) + \ldots + P_\Omega((f, k_s)).
  \]
  
  If we can show that, for every $1 \leqslant k \leqslant n$,  
  \begin{equation}\label{eq:single_beta}
    P_\Omega(f(X_{j}) = X_{j + 1, 1} \text{ for every } 1 \leqslant j \leqslant n \mid (f, k) \in \Omega) = g(\bm{X}_1, s(\bm{X}_k)),
  \end{equation}
  then~\eqref{eq:single_function} would follow by summing up~\eqref{eq:single_beta} over all $k$ and using the law of total probability.
  
  We consider one of the canalizing variables of $f$, say, $x_k$.
  Let $c$ be the canalizing value of $x_{k_1}$, and let $v$ be the value taken by $f$ when $x_{k_1} = c$.
  Then $(c, v) \in \{0, 1\}^2$, and all these four cases have the same probability due to the symmetry.
   As $g(\alpha, s(\beta)) =\frac{1}{4}(h(\alpha, \beta) + h(\bar{\alpha}, \beta) + h(\alpha, \bar{\beta}) + h(\bar{\alpha}, \bar{\beta}))$, it is sufficient to show that 
   \begin{equation}\label{eq:cv_fixed}
      P_\Omega(f(X_{j}) = X_{j + 1, 1} \text{ for every } 1 \leqslant j \leqslant n \mid (f, k) \in \Omega \text{ and } c = v = 0) = h(\bm{X}_1, s(\bm{X}_k))
   \end{equation}
   and then sum for all $(c, v) \in \{0, 1\}^2$.
   
   To prove~\eqref{eq:cv_fixed}, consider any $j$, say $j = 1$.
   There are then 4 cases for the values of $X_{1, k}$ and $X_{2, 1}$.
   
  \begin{enumerate}[itemsep=0pt, topsep=0pt]
      \item \label{itm:genprf} $X_{1, k} = 1$ and $X_{2, 1}$ is $0$ or $1$. 
      With probability $\frac{1}{2}$, we have $f(X_1) = X_{2, 1}$. 
      This is true due to symmetry, as for any $f_1$ which takes on the value $w$ at $X_1$, we can produce another function $g$ that is equal to $0$ if $X_{1, k} = 0$ and $\bar{f}_1$ if $X_{1, k} = 1$. 
      Then $g(X_1) = \bar{w}$.
      
      \item\label{case:zero} $X_{1, k} = 0$ and $X_{2, 1} = 1$.
      Since $X_{1, k} = c$, the probability of $f(X_1) = X_{2, 1} \neq v = 0$ is zero.
      
      \item\label{case:one} $X_{1, k} = X_{2, 1} = 0$.
      Since $X_{1, k} = c$ and $X_{2, 1} = v$, the canalization property implies that $f(X_1) = X_{2, 1}$ with probability one.
  \end{enumerate}
  
  The only case in which $\bm{X}_1\vee s(\bm{X}_k)\neq s(\bm{X}_k)$ is where there is at least one $j$ such that case~\ref{case:zero} is realized.
  In this case, the probability in the left-hand side of~\eqref{eq:cv_fixed} will be zero.
  Otherwise, each occurrence of case~\ref{itm:genprf} will multiply the total probability by $\frac{1}{2}$  and each occurrence of case~\ref{case:one} will multiply the total probability by $1$. 
  Thus, we show that the left-hand side of~\eqref{eq:cv_fixed} is indeed equal to $h(\bm{X}_1, s(\bm{X}_k))$.
  This finishes the proof of~\eqref{eq:attractor_probab}.
  
  To finish the proof of the lemma, we set 
  \[
  U := \{\mathbf{n} \in \mathbb{Z}_{\geqslant 0}^{S} \mid \sum\limits_{\alpha \in S} n_{\alpha} = n \;\&\; \text{the support of } \mathbf{n} \text{ does not belong to } \bigcup\limits_{1 \leqslant i < j \leqslant \ell} S_{i, j}\}.
  \]
  Summing~\eqref{eq:attractor_probab} over all $\ell$-tuples $(X_1, \ldots, X_\ell)$ of distinct elements of $\{0, 1\}^n$, we obtain (see~\eqref{eq:BAn})
  \[
    \ell B_{\ell, n} = \sum\limits_{\mathbf{n} \in U} \frac{(G_\ell\mathbf{n})^{\mathbf{n}}}{n^n} \leqslant C(G_\ell)_n.
  \]
  On the other hand, if $\mathbf{n}$ is supported on one some $S_{i, j}$, then $G_\ell \mathbf{n} = G_{\ell;i, j} \mathbf{n}|_{S_{i, j}}$, where $\mathbf{n}|_{S_{i, j}}$ denotes the restriction of $\mathbf{n}$ on the coordinates from $S_{i, j}$.
  This implies that
  \[
  C(G_\ell)_n - \ell B_{\ell, n} \leqslant \sum\limits_{1 \leqslant i < j \leqslant \ell} C(G_{\ell; i, j})_n.
  \]
  This finishes the proof of the lemma.
\end{proof}

%%%%%%%%%%%%%

\begin{proof}[Proof of Theorem~\ref{thm:main}]
  We fix positive integer $\ell$. 
  In the notation of Lemma~\ref{lem:eq_lim}, we have $A_\ell = \lim\limits_{n \to \infty} A_{\ell, n}$.
  Lemma~\ref{lem:eq_lim} implies that $A_\ell = \lim\limits_{n \to \infty} B_{\ell, n}$.
  We fix any $1 \leqslant i < j \leqslant \ell$, and let $C_{\ell; i, j}$ be the matrix given by Lemma~\ref{lem:stochastic}.
  We set $M := \frac{2^{\ell + 2}}{2^{\ell + 2} - 1} (G_{\ell; i, j} + C_{\ell; i, j})$.
  Then
  \[
  0 \leqslant C(G_{\ell; i, j})_n \leqslant C(G_{\ell; i, j} + C_{\ell; i, j})_n = \left( \frac{2^{\ell + 2} - 1}{2^{\ell + 2}} \right)^n C(M)_{n}.
  \]
  Lemma~\ref{lem:lim} implies that $\lim\limits_{n \to \infty} C(M)_n$ is finite, thus we have that $\lim\limits_{n \to \infty} C(G_{\ell; i, j})_n = 0$.
  We finish the proof of the theorem by considering the limit of~\eqref{eq:sandwich} and applying Lemma~\ref{lem:lim} to $G_\ell$.
\end{proof}

%%%%%%%%%%%%%%%%%%%%%%%%%%%%%%%%%%%%%%%%%%%%%%%%%%%%%%%%%%%%%%%%%%%%%%%%%%%%%%%%%

\section*{Appendix B: Complexity analysis}

     \setcounter{lemma}{0}
     \renewcommand{\thelemma}{B.\arabic{lemma}}
     \setcounter{proposition}{0}
     \renewcommand{\theproposition}{B.\arabic{proposition}}
     \setcounter{notation}{0}
     \renewcommand{\thenotation}{B.\arabic{notation}}
     \setcounter{remark}{0}
     \renewcommand{\theremark}{B.\arabic{remark}}
     \setcounter{corollary}{0}
     \renewcommand{\thecorollary}{B.\arabic{corollary}}
     \setcounter{example}{0}
     \renewcommand{\theexample}{B.\arabic{example}}

   \begin{proposition}\label{prop:alg_partition}
     Complexity of Algorithm~\ref{alg:partition} is $\mathrm{O}(k^3)$.
   \end{proposition}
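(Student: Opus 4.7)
The plan is to analyze the cost of each step of Algorithm~\ref{alg:partition} for an input set of size $m$, and then sum over the recursive calls. Throughout, I will work under the unit-cost arithmetic model (additions, multiplications, binomial coefficient lookups, and random integer generation are considered $\mathrm{O}(1)$).

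First I would bound Step~\ref{alg_step:compute_pk}. Computing $p_0, \ldots, p_m$ via the recurrence $p_j = \sum_{i=0}^{j-1}\binom{j}{i}p_{j-i}$ requires $\mathrm{O}(j)$ operations to obtain $p_j$, so the total cost is $\sum_{j=0}^m \mathrm{O}(j) = \mathrm{O}(m^2)$. Next, Steps~\ref{alg_step:gen_rnd} and~\ref{alg_step:find_j} together cost $\mathrm{O}(m)$ (a single random draw, followed by a linear scan that accumulates partial sums). Step~\ref{alg_step:select_subset} (uniformly sampling a size-$j$ subset of an $m$-element set, e.g. by a reservoir or Fisher--Yates-style procedure) is also $\mathrm{O}(m)$. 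Hence a single call on a set of size $m$, excluding the recursive call, takes $\mathrm{O}(m^2)$ operations.

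Then I would handle the recursion. Each recursive call receives a strictly smaller set, because $X_1$ must be nonempty, so the sizes encountered along the recursion form a strictly decreasing sequence bounded above by $k$. Summing the per-call cost over all recursive invocations yields
\[
\sum_{m=1}^{k} \mathrm{O}(m^2) = \mathrm{O}(k^3),
\]
which gives the claimed bound.

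I do not anticipate a serious obstacle here: the only subtle point is ensuring that no hidden super-linear cost is incurred inside Steps~\ref{alg_step:gen_rnd}--\ref{alg_step:select_subset}, which is immediate from standard implementations. If one wished to be pedantic about bit complexity (the $p_j$ grow like $j!/(2(\log 2)^{j+1})$, so they have $\Theta(j\log j)$ bits), then the arithmetic cost estimates would need to be inflated accordingly; however, under the standard unit-cost convention adopted in this paper, the analysis above suffices.
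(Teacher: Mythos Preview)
Your proposal is correct and follows essentially the same approach as the paper: bound the cost of a single (non-recursive) call by $\mathrm{O}(m^2)$, dominated by Step~\ref{alg_step:compute_pk}, and then account for at most $k$ recursive levels to obtain $\mathrm{O}(k^3)$. The only minor differences are that the paper gives looser $\mathrm{O}(k^2)$ bounds for Steps~\ref{alg_step:find_j} and~\ref{alg_step:select_subset} and multiplies the per-call cost by the recursion depth rather than summing $\sum_{m\le k}\mathrm{O}(m^2)$, but neither affects the final estimate.
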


   \begin{proof}
             First we show the complexity of a single run the algorithm, i.e., not taking into account the recursive call, is $\mathrm{O}(k^2)$.

         First we show the complexity of a single run the algorithm, i.e., not taking into account the recursive call, is $\mathrm{O}(k^2)$.

         First we show the complexity of a single run the algorithm, i.e., not taking into account the recursive call, is $\mathrm{O}(k^2)$.

         First we show the complexity of a single run the algorithm, i.e., not taking into account the recursive call, is $\mathrm{O}(k^2)$.

     First we show the complexity of a single run the algorithm, i.e., not taking into account the recursive call, is $\mathrm{O}(k^2)$.
     Since the first $k$ rows of the Pascal's triangle can be precomputed in $\mathrm{O}(k^2)$, the complexity of step~\ref{alg_step:compute_pk} is also $\mathrm{O}(k^2)$.
     Similarly, the complexity of step~\ref{alg_step:find_j} is $\mathrm{O}(k^2)$.
     It remains to observe that step~\ref{alg_step:gen_rnd} takes $\mathrm{O}(1)$ and step~\ref{alg_step:select_subset} takes $\mathrm{O}(k^2)$ (indeed, selecting a subset of size $j$ amounts to selecting and removing $j$ indices).
     In total, we obtain $\mathrm{O}(k^2)$.
     
     The depth of the recursion calls is at most $k$. 
     Since the complexity of each single call is $O(k^2)$, so the total complexity is $O(k^3)$.
   \end{proof}

   \begin{lemma}\label{lem:complexity_generate_noncanalizing}
      The average complexity of the algorithm for generating a function in $n > 0$ variables which is either $1$ or noncanalizing described in Remark~\ref{rem:noncanalizing} is $\mathrm{O}(n2^n)$.
   \end{lemma}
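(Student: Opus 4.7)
The plan is to decompose the expected running time into two factors: the cost of a single iteration (sampling a random Boolean function and testing it for canalization) and the expected number of iterations needed until an acceptable function is produced. If each iteration costs $O(n 2^n)$ and the expected number of iterations is $O(1)$, multiplying the two yields the claim.

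First I would bound the cost of a single iteration. A Boolean function in $n$ variables is encoded by its truth table of $2^n$ bits, so sampling one uniformly at random costs $O(2^n)$. To test canalization, for every one of the $n$ variables $x_i$ and every value $a \in \{0,1\}$, one scans the $2^{n-1}$ truth-table entries on which $x_i = a$ and checks whether they are all equal. Summed over the $2n$ pairs $(i, a)$, this takes $O(n \cdot 2^n)$ time, which dominates the sampling cost.

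Next I would show that the expected number of iterations is bounded by a constant independent of $n$. The accepted set consists of the constant function $1$ together with every (necessarily nonconstant) noncanalizing function. Reusing the counting estimate already carried out in the proof of Lemma~\ref{lem:eq_lim}, the number of canalizing functions in $n$ variables is at most $4n \cdot 2^{2^{n-1}}$, while the total number of Boolean functions in $n$ variables is $2^{2^n}$. Hence the probability that a uniformly sampled Boolean function is rejected is at most
\[
\frac{1 + 4n \cdot 2^{2^{n-1}}}{2^{2^n}} \;=\; \frac{1}{2^{2^n}} + \frac{4n}{2^{2^{n-1}}},
\]
which tends to $0$ as $n \to \infty$; in particular it is bounded away from $1$ for every $n \geqslant 3$.

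The two base cases $n = 1, 2$, where the above bound is vacuous, I would handle by direct inspection: for $n = 1$ the accepted set is just $\{1\}$ (success probability $1/4$), and for $n = 2$ it is $\{1,\, x_1 \oplus x_2,\, x_1 \oplus x_2 \oplus 1\}$ (success probability $3/16$). Combining the asymptotic bound with these two explicit values gives a uniform positive lower bound on the per-iteration success probability, so the number of iterations is stochastically dominated by a geometric random variable of constant mean. Multiplying the $O(n \cdot 2^n)$ per-iteration cost by this $O(1)$ expected count proves the lemma. The only real obstacle is bookkeeping: making sure the success-probability lower bound holds \emph{uniformly} in $n$, which the asymptotic/small-case split handles cleanly.
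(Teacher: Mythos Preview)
Your proposal is correct and follows essentially the same approach as the paper: bound a single iteration by $O(n2^n)$ and show the rejection probability is uniformly bounded below $1$ via an asymptotic count of canalizing functions plus explicit checks at $n=1,2$. The only cosmetic difference is the source of the $4n\cdot 2^{2^{n-1}}$ bound (the paper cites an external reference, you reuse the estimate from Lemma~\ref{lem:eq_lim}); your small-case analysis is in fact slightly more careful than the paper's, which asserts the expected number of reruns is at most $4$ even though for $n=2$ it is $16/3$.
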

   
   \begin{proof}
      \cite[p. 116]{canalizing_proportion} implies that the proportion of functions which are canalizing in $n$ variables is bounded from above by $\frac{4n}{2^{2^{n - 1}}}$.
      Note that \cite{canalizing_proportion} considers constant functions canalizing which we do not.
      Thus the probability $P_{n}$ of choosing a function which is either $1$ or noncanalizing is bounded from above by 
      \[
      \dfrac{4n}{2^{2^{n - 1}}} - \dfrac{1}{2^{2^{n}}} = \dfrac{4n - \frac{1}{2^{2^{n - 1}}}}{2^{2^{n - 1}}}.
      \]
      This bound is less than $\frac{3}{4}$ for all values of $n$ except 1 and 2, but we can compute directly that $P_1 = \frac{3}{4}$ and $P_2 = \frac{13}{16}$.
      Therefore, the number of times the generation of a function needs to be repeated averages to $\frac{1}{1 - P_{n}}$, which does not exceed $4$, so the average complexity of the whole procedure is the same as of a single generation step.
        
      The complexity of a single step consists of generating a random function (which is $\mathrm{O}(2^n)$) and checking whether it is canalizing or not.
      We perform this check by running linearly through the table for each variable, so the complexity is $\mathrm{O}(n2^n)$ time. 
      Thus, the total complexity is indeed $\mathrm{O}(n2^n)$.
   \end{proof}

   \begin{lemma}\label{lem:rerun}
      There is a constant $c < 1$ such that the probability that a function generated in steps~\ref{step:generate_data} and~\ref{alg_step:form_function} of Algorithm~\ref{alg:k_canalizing} does not satisfy one of the of the conditions~\ref{e1} or~\ref{e2} is bounded by $c$ for every $n$.
   \end{lemma}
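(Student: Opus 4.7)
The plan is to bound each exceptional condition separately; the probability that $p_C \equiv 1$ will enter only as a prefactor bounded by $1$, so the argument reduces to a statement about the random ordered partition of $X$ and the random bit $b$.

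The events ``(E1) is violated'' (i.e., $p_C \equiv 1$, $r \neq 1$, $k_r = 1$) and ``(E2) is violated'' (i.e., $p_C \equiv 1$, $r = 1$, $k_1 = 1$, $b = 1$) are disjoint, and the quantities appearing in the partition condition and the bit $b$ are sampled in Algorithm~\ref{alg:k_canalizing} independently of $p_C$. Writing $k$ for the input canalizing depth, I would therefore bound
\[
P(\text{violation}) \leqslant P(r \neq 1,\; k_r = 1) + \tfrac{1}{2}\,P(r = 1,\; k_1 = 1),
\]
where the factor $1/2$ is $P(b = 1)$. It suffices to show the right-hand side is at most some $c < 1$ uniformly in $k \leqslant n$.

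For the second term, $r = 1$ and $k_1 = 1$ force $k = 1$, so it equals $\tfrac{1}{2}$ when $k = 1$ and $0$ otherwise. For the first term with $k \geqslant 2$, I would count ordered partitions of a $k$-set with a singleton last block and at least two blocks: choose the last block's element ($k$ ways) and an ordered partition of the remaining $k - 1$ elements ($p_{k-1}$ ways, in the notation of Algorithm~\ref{alg:partition}), giving $k\,p_{k-1}$ such partitions. Since Algorithm~\ref{alg:partition} samples the partition uniformly,
\[
P(r \neq 1,\; k_r = 1) = \frac{k\,p_{k-1}}{p_k}.
\]

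The main step is verifying a uniform bound $< 1$ on this ratio. A strict inequality $k\,p_{k-1}/p_k < 1$ for each finite $k \geqslant 2$ is immediate from the Fubini recurrence, which gives $p_k \geqslant k\,p_{k-1} + 1$. To upgrade this to a uniform bound, I would invoke the classical asymptotic $p_k \sim k!/(2(\ln 2)^{k+1})$ for the Fubini (ordered Bell) numbers, from which $\lim_{k \to \infty} k\,p_{k-1}/p_k = \ln 2 < 1$. Since $\{k\,p_{k-1}/p_k\}_{k \geqslant 2}$ converges to a value strictly below $1$ while each term is strictly less than $1$ (with initial values $2/3,\,9/13,\,52/75,\ldots$), its supremum is some $c_1 < 1$. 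Taking $c := \max(c_1, \tfrac{1}{2}) < 1$ then finishes the proof.
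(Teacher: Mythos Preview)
Your proof is correct and follows essentially the same approach as the paper: both reduce the question to bounding the ratio $k\,p_{k-1}/p_k$ (the probability that the last block of a uniform ordered partition is a singleton) and invoke its limit $\ln 2 < 1$. Your unconditional decomposition $P(r\neq 1,\,k_r=1)+\tfrac{1}{2}P(r=1,\,k_1=1)$ is in fact slightly cleaner than the paper's conditional split, and your explicit use of the recurrence to secure $k\,p_{k-1}/p_k<1$ for each finite $k\geqslant 2$ fills a small step the paper leaves implicit.
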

   
   \begin{proof}
     Notice that
     \[
     P(\text{\ref{e1} or \ref{e2} is false}) = P(r \neq 1)P(\text{\ref{e1} is false} \mid r \neq 1) + P(r = 1)P(\text{\ref{e2} is false} \mid r = 1)
     \]
     We will show that there is a constant $c < 1$ such that $P(\text{\ref{e1} is false} \mid r \neq 1)$ and $P(\text{\ref{e2} is false} \mid r = 1)$ do not exceed $c$.
     
     \begin{itemize}
         \item $P(\text{\ref{e1} is false} \mid r \neq 1)$.
         The probability of having $k_r = 1$ (the only possible $k_r < 2$) is just the proportion of ordered partitions with a single element at the end. 
         We can construct all of these by picking an element and then picking a partition of the remaining elements, so this creates $k\cdot p_{k-1}$ possibilities. 
         Thus the probability this occurring is $\dfrac{kp_{k-1}}{p_k}$. 
         \cite[Eq. (5)]{fubini} implies that this approaches $\ln(2) < 1$ as $n$ goes to infinity. 
         Thus, there exists such $c$.
         
         \item $P(\text{\ref{e2} is false} \mid r = 1)$.
         The probability of ever picking $b=1$ is just $\frac{1}{2}$, so we can take $c = \frac{1}{2}$.
     \end{itemize}
   \end{proof}

   \begin{proposition}\label{prop:alg_k_can}
      Complexity of Algorithm~\ref{alg:k_canalizing} is $\mathrm{O}(n2^n)$.
   \end{proposition}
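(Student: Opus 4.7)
The plan is to decompose the complexity of one execution of Algorithm~\ref{alg:k_canalizing} into the costs of its substeps, bound each by $\mathrm{O}(n2^n)$, and then use Lemma~\ref{lem:rerun} to argue that the expected number of restarts caused by Step~\ref{alg_step:check} is bounded by a constant, so the expected total cost equals (up to a constant factor) the cost of a single pass.

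First I would walk through Step~\ref{step:generate_data}. Generating the $n+1$ random bits in (a) is $\mathrm{O}(n)$, selecting a random subset of size $k \leqslant n$ in (b) is $\mathrm{O}(n)$, and generating the ordered partition in (c) is $\mathrm{O}(k^3) = \mathrm{O}(n^3)$ by Proposition~\ref{prop:alg_partition}. For (d), we generate a random noncanalizing function on $n - k$ variables; by Lemma~\ref{lem:complexity_generate_noncanalizing} this costs $\mathrm{O}((n-k)2^{n-k}) = \mathrm{O}(n2^n)$ on average. All four subparts fit within the $\mathrm{O}(n2^n)$ budget.

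Next I would analyze Step~\ref{alg_step:form_function}, which is where the main bottleneck lies. Forming the function $f$ means producing its $2^n$-entry truth table from the representation
\[
f(x_1,\ldots,x_n) = M_1(M_2(\cdots(M_r p_C + 1)\cdots)+1) + b.
\]
For each of the $2^n$ inputs, evaluating each $M_i$ takes time proportional to $k_i$, evaluating $p_C$ uses its precomputed $2^{n-k}$-entry table in $\mathrm{O}(1)$, and combining the layers takes $\mathrm{O}(r) = \mathrm{O}(k)$ operations. Thus each input is handled in $\mathrm{O}(n)$ and the full table is produced in $\mathrm{O}(n2^n)$ time. The remaining work, reading off $p_C$ from the precomputed $2^{n-k}$-entry table, also fits within this bound.

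Finally, I would invoke Lemma~\ref{lem:rerun}: the probability that Step~\ref{alg_step:check} forces a restart is at most a fixed constant $c < 1$ independent of $n$, so the number of iterations is a geometric random variable with expectation at most $\frac{1}{1-c}$. Multiplying the single-iteration bound $\mathrm{O}(n2^n)$ by this constant preserves the asymptotic rate, giving overall expected complexity $\mathrm{O}(n2^n)$. The only delicate point is confirming that one pass of Step~\ref{alg_step:form_function} truly is $\mathrm{O}(n2^n)$ rather than something larger (e.g.\ $\mathrm{O}(n^2 2^n)$), which rests on evaluating the nested layer structure in $\mathrm{O}(n)$ per input; this is straightforward since each variable appears in exactly one of $M_1,\ldots,M_r,p_C$ and can be examined once per evaluation.
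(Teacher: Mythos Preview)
Your proposal is correct and follows essentially the same approach as the paper: bound each substep of a single pass (using Proposition~\ref{prop:alg_partition} and Lemma~\ref{lem:complexity_generate_noncanalizing} for Step~\ref{step:generate_data}, the $2^n$-entry truth table for Step~\ref{alg_step:form_function}), then invoke Lemma~\ref{lem:rerun} to conclude the expected number of reruns is constant. The only minor differences are that the paper quotes $\mathrm{O}(k2^n)$ rather than $\mathrm{O}(n2^n)$ for Step~\ref{alg_step:form_function} (since only the $k$ canalizing layers need be examined before falling through to the $p_C$ lookup) and explicitly records that checking~\ref{e1} and~\ref{e2} in Step~\ref{alg_step:check} costs $\mathrm{O}(2^n)$; neither point affects the final bound.
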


   \begin{proof}
     Lemma~\ref{lem:rerun} implies that the average number of reruns in step~\ref{alg_step:check} is constant.
     Thus, the complexity of the algorithm is the same as of a single run.
   
     Proposition~\ref{prop:alg_partition} and Lemma~\ref{lem:complexity_generate_noncanalizing} imply that the complexity of step~\ref{step:generate_data} is $\mathrm{O}(k^3 + (n - k)2^{n - k})$.
     Step~\ref{alg_step:form_function} generates a truth table for the function. 
     There are $2^n$ input-output pairs, and computing the function takes at most $k$ steps, so this is $\mathrm{O}(k2^n)$.
     In step~\ref{alg_step:check}, the conditions~\ref{e1} or~\ref{e2} are verified in $\mathrm{O}(2^n)$ time.
     
     Summing everything, we obtain $\mathrm{O}(k^3 + (n - k)2^{n - k} + k2^n) = \mathrm{O}(n2^n)$
   \end{proof}

\end{document}